\renewcommand{\tableofcontents}{\section*{\contentsname} \@starttoc{toc}}
\let\csname equation*\endcsname\reax
\let\csname endequation*\endcsname\relax
\theoremstyle{plain}
\newtheorem{Lem}{Lemma}
\def\ket#1{\left| #1\right>}
\def\bra#1{\left< #1\right|}
\newcommand\defn[1]{\textsl{#1}}
\newcommand\ketbra[2]{|#1\rangle\langle#2|}
\newcommand\cH{{\mathcal H}}
\newcommand\cM{{\mathcal M}}
\newcommand\cN{{\mathcal N}}
\newcommand\cD{{\mathcal D}}
\newcommand\cI{{\mathcal I}}
\newcommand\cB{{\mathcal B}}
\newcommand\cE{{\mathcal E}}
\newcommand\cP{{\mathcal P}}
\newcommand{\one}{\mbox{$1 \hspace{-1.0mm}  {\bf l}$}}
\DeclareMathAlphabet{\mathpzc}{OT1}{pzc}{m}{it}
\begin{document}

\title{Optimal quantum states for frequency estimation}

\author{F Fr\"owis$^{1,2}$, M Skotiniotis$^1$, B Kraus$^1$, W D\"ur$^1$}
\address{$^1$ Institut f\"ur Theoretische Physik, Universit\"at  Innsbruck, Technikerstra\ss e 25, 6020 Innsbruck, Austria}
\address{$^2$ Group of Applied Physics, University of Geneva, CH-1211 Geneva 4, Switzerland}
\ead{florian.froewis@unige.ch}
\date{\today}

\begin{abstract}
  We investigate different quantum parameter estimation scenarios in the presence of noise, and identify optimal probe states. For frequency estimation of local Hamiltonians with dephasing noise, we determine optimal probe states for up to 70 qubits, and determine their key properties. We find that the so-called one-axis twisted spin-squeezed states are only almost optimal, and that optimal states need not to be spin-squeezed. For different kinds of noise models, we investigate whether optimal states in the noiseless case remain superior to product states also in the presence of noise. For certain spatially and temporally correlated noise, we find that product states no longer allow one to reach the standard quantum limit in precision, while certain entangled states do. Our conclusions are based on numerical evidence using efficient numerical algorithms which we developed in order to treat permutational invariant systems.
\end{abstract}

\pacs{03.67.-a, 03.65.Ud, 03.65.Yz, 03.65.Ta}
\maketitle

\section{Introduction}
\label{sec:introduction}

How can one determine a quantity or parameter with high precision? This is one of the central questions in physics throughout all fields. In addition to the numerous practical problems in this quest, one faces some fundamental limits, for example, set by measurement statistics. Metrology as the science of measurement aims at identifying these limits, and at developing optimal schemes for estimating an unknown parameter in a given experimental setup. In the seminal paper by Caves~\cite{caves_quantum-mechanical_1981} these questions were tackled within the quantum mechanical framework. It was found that quantum metrology offers a significant advantage as compared to classical strategies, where a quadratic improvement in achievable precision is obtained owing to the use of quantum entanglement. Nowadays, this insights find widespread applications in interferometry~\cite{Holland:93, Hwang:02}, atomic clocks~\cite{Valencia:04, deburgh:05}, gravitational wave detectors~\cite{McKenzie:02,
collaboration_gravitational_2011} and frequency estimation~\cite{Wineland:92, Bollinger:96}.

In this work, we consider frequency estimation where the strength, $\omega$, of a local Hamiltonian, $H$, describing the interaction of $N$ qubits with an external field in $z$ direction should be estimated. In the noiseless case one finds that for classical strategies the achievable precision, $\delta \omega$, that measures the statistical deviation of the estimated parameter from the actual value, is given by the standard quantum limit (SQL), $\delta \omega \geq O(1/\sqrt{N})$ where $N$ is the number of probe systems. In turn, entangled states allow one to achieve Heisenberg scaling where $\delta \omega = O(1/N)$. Optimal states are readily identified as a coherent superposition of eigenstates of $H$ corresponding to the minimal and maximal eigenvalue. The optimal state corresponds to the GHZ state (after Greenberger, Horne and Zeilinger~\cite{GHZ}).

In the presence of noise and imperfections the situation changes drastically. For dephasing noise (what we will refer to as ``standard scenario'' in the following), one finds that GHZ states do not offer any advantage as compared to classical strategies with noise~\cite{huelga_improvement_1997}. Moreover, no state can reach the quadratic gain in precision and at most a constant gain factor can be achieved. This is expressed in terms of general bounds~\cite{escher_general_2011,demkowicz-dobrzanski_elusive_2012}, that show for generic kinds of noise the impossibility to reach Heisenberg scaling~\footnote{Note that, for purely transversal noise and in the case of certain correlated noise, a scaling better than the SQL can be achieved~ \cite{chaves_noisy_2013,Chin:12,matsuzaki_magnetic_2011,jeske:2013}. In some cases, even Heisenberg scaling can be achieved using quantum error correction~\cite{dur_improved_2013,kessler_quantum_2014,arrad_increasing_2014}, or by other means~\cite{jeske:2013}.}. In the 
standard scenario so-called one-axis twisted spin-squeezed states (SSS)~\cite{kitagawa_squeezed_1993} were identified to reach the optimal constant gain factor~\cite{ulam-orgikh_spin_2001} in the limit of large $N$, and it is widely believed that such states are optimal even for finite $N$.

However, the situation is far from being fully understood. On the one hand, optimal states for finite $N$ in the standard scenario are unknown, and it is not known whether states other than the SSS are optimal in the asymptotic case. On the other hand, beyond the standard scenario where one considers different kinds of noise models (e.g., depolarising rather than dephasing noise), or different Hamiltonians~\cite{RJ09}, basically nothing is known about optimal states be it for finite $N$ or in the asymptotic limit (an exception is transversal noise considered in~\cite{chaves_noisy_2013}).

In this paper, we are concerned with the question of identifying optimal states for frequency estimation in different noisy metrology scenarios. Our central results for the standard scenario are as follows:
\begin{itemize}
\item We determine optimal states for finite $N$. To this aim we develop numerical tools to treat permutational invariant states that allow us to investigate systems of up to 70 qubits.
\item We find that spin-squeezing is not necessary for optimality. For finite $N$ one-axis twisted SSS are in fact only almost optimal. 
\item We identify a key feature for optimality which is a specific global distribution of coefficients of the state in the eigenbasis of $H$. This also implies a certain value of the variance of $H$.
\end{itemize}
Beyond the standard scenario, where we consider local depolarising noise or spatially and temporally correlated noise, we find:
\begin{itemize}
\item Contrary to the standard scenario GHZ states may remain superior to product states also in the noisy case.
\item For spatially and temporally correlated noise, $N$ qubit product states do not even reach the SQL, whereas GHZ states do. This opens again a gap in the scaling. In addition, the equivalence between parallel and sequential strategies~\cite{maccone_intuitive_2013} (i.e., to either use one particle $N$ times, or $N$ particles in parallel) does not hold in this case.
\end{itemize}

The paper is organised as follows. In section~\ref{sec:physical-scenarios}, we provide relevant background information, and describe the different scenarios we consider. There, we also discuss the ansatz space of qubit states we consider and outline the numerical method we develop. In section~\ref{sec:standard-scenario}, we consider the standard scenario with local Hamiltonian and dephasing noise, and determine optimal states and their features. In section~\ref{sec:beyond-stand-scen}, we present results for frequency estimation scenarios with different kinds of noise models. We summarise and conclude in section~\ref{sec:summary-conclusions}. Some technical details, in particular regarding the numerical algorithm we develop here, as well as additional analytical results are presented in the appendices.

\section{Preliminaries \label{sec:physical-scenarios}}

In this section we briefly review the quantum metrology scenarios considered in this work, as well as the main theoretical tools behind our numerical routines.  Specifically, in section~\ref{sec:resource-counting} we review the main results of classical and quantum metrology and provide the mathematical descriptions for all the models we investigate.  Section~\ref{sec:ansatz-space} introduces the ansatz state space used in our numerical optimisation routines, and section~\ref{sec:numerical-methods-1} provides a brief overview of the mathematical tools on which our numerical optimisation is based on.

\subsection{Dynamical evolution models for quantum frequency estimation}
\label{sec:resource-counting}

In a metrological scenario the goal is to estimate a parameter, $\omega\in\mathbb{R}$, from a probability distribution, $p(x|\omega)$, where $x\in\mathbb{R}$ denotes the outcomes of a suitable measurement. An \defn{unbiased} estimate, $\hat{\omega}$, of $\omega$ is obtained by suitable post-processing of the measurement outcomes~\footnote{An estimate is unbiased if its expected value, $\langle\hat{\omega}\rangle$, with respect to the probability distribution $p(x|\omega)$ is equal to $\omega$.}. The variance, $\delta\omega^2=\langle  (\omega-\frac{\hat{\omega}}{\left|\mathrm{d}\hat{\omega}/\mathrm{d}\omega\right|})^2\rangle$, of an unbiased estimator is lower-bounded via the well-known Cram\'{e}r-Rao inequality~\cite{Cramer:61}
\begin{equation}
 \delta\omega^2\geq\frac{1}{n F},
 \label{CCR}
\end{equation}
where $F$ is the \defn{Fisher information} given by~\cite{Fisher:22}
\begin{equation}\label{eq:1}
 F=\int\left(\frac{\partial \ln p(x|\omega)}{\partial\omega}\right)^2\mathrm{d}x,
\end{equation}
and $n$ is the number of measurements. It is known that the lower bound in equation \eqref{CCR} can be achieved in the limit $n\to\infty$ by the maximum likelihood estimator~\cite{Fisher:25}.

Assume that the parameter of interest is a quantity governing the evolution of a physical system, such as the frequency of rotation of a spin about a given axis. Then, the scenario described above is implemented by preparing the system in some initial known state, $\rho$, and allowing it to evolve under the requisite dynamics for some time before measuring its final state, $\rho(\omega, t)$. This process is repeated $n$ times to obtain the measurement statistics, $p(x|\omega)$, from which an estimate $\hat{\omega}$ can be extracted. If the physical system is quantum mechanical in nature, i.e.~$\rho(\omega, t)\in\cB(\cH)$, then the probability distribution, $p(x|\omega)$, is given by $p(x|\omega)=\mathrm{Tr}(M_x\rho(\omega, t))$, where the set of measurement operators $\{M_x:\cB(\cH)\to\cB(\cH)\}$ satisfies $\sum_x M_x^\dagger M_x=I$.

As any set of measurement operators, $\{M_x\}$, constitute an admissible measurement it is natural to ask which measurement minimises equation \eqref{CCR} or, equivalently, maximises the Fisher information. The Quantum Fisher Information (QFI), $\mathcal{F}$, is defined as the maximal Fisher information over all allowable measurements, and is given by~\cite{H76, H80, BC94}
\begin{equation}
\mathcal{F}=\mathrm{Tr}\left[\dot{\rho}L_{\omega}\right],
\label{qfi}
\end{equation}
where
\begin{equation}
L_{\omega}=2\sum_{i,j}\frac{\bra{\psi_j}\dot{\rho}(\omega, t)\ket{\psi_i}}{\lambda_i+\lambda_j}\ket{\psi_j}\bra{\psi_i}
\label{Eq:SLD}
\end{equation}
is known as the \defn{symmetric logarithmic derivative}.  Here, $\dot{\rho}(\omega, t)=\partial\rho(\omega, t)/\partial\omega$, $\lambda_j$ are the eigenvalues of $\rho(\omega, t)$, $\ket{\psi_j}$ the corresponding eigenvectors, and the sum in equation \eqref{Eq:SLD} is over all $i,j$ satisfying $\lambda_i+\lambda_j\neq0$. The most informative measurement is the one whose elements are the projectors on the eigenspaces of the symmetric logarithmic derivative. Substituting equation \eqref{qfi} in place of the Fisher information in equation \eqref{CCR} yields the \defn{quantum Cram\'{e}r-Rao inequality} which provides the ultimate lower bound on precision achievable by a quantum mechanical strategy.

As the QFI already incorporates the optimisation over all measurements, it remains to minimise the uncertainty in $\omega$ with respect to all other available resources.  In this work we focus solely on estimating the frequency of rotation of a spin around a given axis.  For this task the two relevant resources are the number of probe systems, $N$, in a given run of the experiment and the total running time, $T=n t$, of the experiment. Here, $t$ denotes the \defn{interrogation time}, the time interval the $N$ probes are subjected to the $\omega$-dependent dynamical evolution before they are measured.  Note that in order to ensure that  the Cram\'er-Rao bound is achievable, we require $n$ in equation (\ref{CCR}) to be very large which implies that $T$ needs to be much larger than any other time scale. With respect to these resources the quantum Cram\'{e}r-Rao bound for frequency estimation reads~\cite{huelga_improvement_1997}
\begin{equation}
\delta\omega^2\geq\frac{1}{T\frac{\mathcal{F}(\rho(\omega, t))}{t}},
\end{equation}
where $\mathcal{F}(\rho(\omega, t))$ denotes the QFI of the final state of the $N$ probes. The interrogation time, $t$, is a controllable parameter which needs to be optimised in order to maximise the precision in frequency estimation.  Henceforth, for a given initial state, $| \psi \rangle $, the maximal QFI is defined as $\mathfrak{F}_{\psi} \mathrel{\mathop:}= \max_t T\mathcal{F}(\rho(\omega, t))/t$.

The dynamical evolution of the $N$ probe systems is described by a master equation of the Lindblad form
\begin{equation}
\label{eq:2}
\frac{d \rho}{dt} = -i \omega [h,\rho] + \mathcal{L}[\rho],
\end{equation}
where $\omega$ is the frequency we are interested in estimating, $\rho$ is the initial state of the $N$ probes, and $H = \omega h$ and $\mathcal{L}$ are the Hamiltonian and Lindblad operators generating the unitary and non-unitary (noisy) part of the evolution respectively.  In this work we consider the \defn{local} generator
\begin{equation}
\label{eq:49}
h\equiv S_z \mathrel{\mathop:}= \frac{1}{2}  \sum_{i=1}^N \sigma_z^{(i)}.
\end{equation}
We note that for local Hamiltonians, and in the absence of noise, the optimal precision in estimating frequency using an initially pure product state scales at the SQL, $\delta\omega^2= O(N^{-1})$. Contradistinctively, if the $N$ probes are initially prepared in a GHZ state, i.e.~an equal superposition of the eigenstates corresponding to the maximum and minimum eigenvalue of $h$, then the precision in estimation scales at the Heisenberg limit, $\delta\omega^2=O(N^{-2})$~\cite{GLM04}.

The noise models we consider here are of two main types. The first type of noise we consider is local, uncorrelated noise described by the Lindblad operator $\mathcal{L}[\rho]=\sum_{j=1}^N\mathcal{L}_j[\rho]$ with
\begin{equation}
{\cal L}_{j}[\rho]=\frac{\gamma}{2}(-\rho+\mu_x\sigma_x^{(j)}\rho\sigma_x^{(j)}+\mu_y\sigma_y^{(j)}\rho\sigma_y^{(j)}+\mu_z\sigma_z^{(j)}\rho\sigma_z^{(j)}),
\label{localnoise}
\end{equation}
where $\gamma$ denotes the strength of the noise and $\sum_i \mu_i = 1$. The case $\mu_x=\mu_y=0, \, \mu_z=1$ corresponds to the case of local, uncorrelated dephasing noise and is the main type of noise considered in this work (section~\ref{sec:standard-scenario}), whereas the case $\mu_x=\mu_y=\mu_z=1/3$ corresponds to local uncorrelated, depolarising noise (section~\ref{sec:local-depolarization}).

The second type of noise we consider is correlated dephasing noise, where correlations are both in space and time (section~\ref{sec:correlated-noise}), with Lindblad operator given by
\begin{equation}
\label{eq:53}
\mathcal{L}[\rho]\equiv -\gamma f(t) \left[ S_z, \left[ S_z,\rho \right] \right].
\end{equation}
Here, $f(t)=1 -\exp(-\gamma t)$ denotes the temporal profile of the noise, whereas the double-commutator in equation \eqref{eq:53} governs the spatial correlations.  Such type of noise is physically relevant in ion trap setups due to fluctuations in the phase reference beams addressing all $N$ ions collectively, i.e.~with infinite spatial correlation length, but finite memory~\cite{kielpinski_decoherence-free_2001,langer_long-lived_2005,roos_designer_2006,monz_14-qubit_2011}.  
It has been demonstrated that for local Hamiltonians temporal correlations alone allow for sub-SQL precision in optical interferometry~\cite{Szankowski:12}, whereas spatially correlated noise alone, with finite or infinite correlation length, can even allow for Heisenberg scaling in precision using a suitably chosen higher dimensional entangled state~\cite{jeske:2013} or considering a different Hamiltonian~\cite{dorner_quantum_2012}.

A key feature for all scenarios considered in this paper is that the unitary and noisy part in equation (\ref{eq:2}) commute. This means that the solution for the time-dependent density operator reads
\begin{equation}
\rho(\omega,t) = e^{-i \omega h t} \mathcal{E}[\left| \psi \right\rangle\!\left\langle \psi\right| ] e^{i \omega h t},
\label{eq:3}
\end{equation}
where $\mathcal{E}[\rho]  = \exp( \mathcal{L}t)[\rho]$ denotes the noise channel and $| \psi \rangle $ the initial state. Writing $\mathcal{E}[\left| \psi \right\rangle\!\left\langle \psi\right| ]$ in its spectral decomposition, $\mathcal{E}[\left| \psi \right\rangle\!\left\langle \psi\right| ] = \sum_i \lambda_i \left| \psi_i \right\rangle\!\left\langle \psi_i\right| $, the QFI can be expressed as~\cite{H76, BC94}
\begin{equation}
\label{eq:4}
\mathcal{F} = 4 t^2\sum_{i<j} \frac{(\lambda_i-\lambda_j)^2}{\lambda_i+\lambda_j} \left| \left\langle \psi_i \right|  h \left| \psi_j \right\rangle  \right|^2.
\end{equation}
Note that $\mathcal{F}$ is here independent of $\omega$. For the case of pure states and in the absence of noise it can be easily verified that $\mathcal{F} = 4 t^2 \mathcal{V}(h) \mathrel{\mathop:}= 4t^2 (\langle h^2 \rangle - \langle h \rangle^2)$.
However, even if we start with an initially pure state, the evolution of such a state under the full dynamics of equation \eqref{eq:2} will in general yield a mixed state. We emphasise that maximising $\mathcal{F}/t$ in time reduces the role of $\gamma$ to a proportionality constant, that is, $\mathfrak{F}_{\psi} = \mathfrak{F}_{\psi}|_{\gamma = 1}/\gamma$. 
Therefore, the findings presented in this paper are independent of a specific choice of the value of $\gamma$, which eases also the numerical effort. This is in contrast to the so-called phase estimation scenario, where the interrogation time is fixed. Then, there are typically two regimes, $N \ll \gamma$ and $N \gg \gamma$, where one can observe qualitatively different behavior for the optimal initial states \cite{escher_general_2011,demkowicz-dobrzanski_elusive_2012}.

\subsection{Ansatz space}
\label{sec:ansatz-space}

Our goal is to numerically determine the initial states leading to maximal sensitivity for the case of noisy frequency estimation and to identify the key properties of such states.  Specifically, we are interested in determining the states that maximise the QFI.  As the latter is convex~\cite{BCM96} it suffices to consider only pure states of $N$ qubits.  However, numerically searching for arbitrary pure states that maximise the QFI seems unfeasible for a large number of qubits. Thus, we restrict ourselves to a subspace of the total Hilbert space in order to reduce computational effort.

In order to pick the most suitable subspace we note that the dynamical evolutions we consider are symmetric under particle exchange. This observation has motivated several authors to consider pure initial states that are symmetric under particle permutations~\cite{huelga_improvement_1997, dorner_quantum_2012, Dobrzanski:09}, i.e.~states of the form
\begin{equation}
\label{eq:5}
\left| \psi \right\rangle = \sum_{m = -j_{\max}}^{j_{\max}} c_m \left| j_{\max}, m \right\rangle.
\end{equation}
The states $| j_{\max},m \rangle $ are simultaneous eigenstates of the total angular momentum operator, $S^2$, and its projection onto the $z$-axis, $S_z$, with corresponding eigenvalues $j_{\max}(j_{\max} +1) = N/2(N/2+1)$ and $m$ respectively, and are the so-called Dicke states with $N/2+m$ excitations~\cite{dicke_coherence_1954}. Note that we define $\ket{0}$ as the excited state, so that for instance the Dicke state of 3 qubits, one of which is in the $|0 \rangle $ state, is $\ket{3/2,-1/2}=1/\sqrt{3}(\ket{011}+\ket{101}+\ket{110})$.

We can reduce the number of parameters in equation \eqref{eq:5} further by requiring that the coefficients $c_m$ are real and positive, as any complex phase can be taken into account by applying a phase gate that commutes with the dynamics (equation (\ref{eq:2})). In addition, the dynamics are invariant under collective spin flips $\sigma_x^{\otimes N}$. Requiring the same symmetry for the states leads to $c_m = c_{-m}$.

With the exception of correlated noise (see equation \eqref{eq:53})~\cite{dorner_quantum_2012}, we are not aware of any proof that the state that maximises the QFI in the presence of local uncorrelated noise  belongs to our ansatz space.  However, by extensive numerical studies for $N = 2,3$, and by comparing specific examples of asymmetric states and their symmetrised counterparts, it seems that the optimal state must exhibit the same symmetry as the dynamics.  With all these considerations taken into account we choose our ansatz space as the space spanned by the states of equation \eqref{eq:5} with $c_m\in\mathbb{R}$, $c_m>0$, and $c_m=c_{-m}$. This subspace is denoted by $\mathcal{S}_N$ in the following. Note that with these restrictions, states in equation \eqref{eq:5} point in $x$ direction, i.e., with $S_x = 1/2 \sum_{i = 1}^N \sigma_x^{(i)},\,S_y = 1/2 \sum_{i = 1}^N \sigma_y^{(i)}$, one has $\langle S_x \rangle\geq 0,\,\langle S_y \rangle=\langle S_z\rangle=0$.

Within this ansatz space, there are several state families that are of particular interest. We now define four such state families for comparison to the optimal states for noisy frequency estimation.
The first family of states we examine are the product states.  The optimal product state for all considered scenarios is $\left| \mathrm{PS} \right\rangle =| + \rangle ^{\otimes N}$ which, when expressed in the $\{\ket{ j_{\max}, m }\}$ basis, reads
\begin{equation}
\ket{\mathrm{PS}}=\sum_{m = -j_{\max}}^{j_{\max}}\sqrt{\binom{N}{N/2+m}/2^N}\ket{j_{\max}, m}.
\label{productstate}
\end{equation}
The second family of states we consider is the GHZ state
\begin{equation}
\ket{\mathrm{GHZ}}=\frac{1}{\sqrt{2}}\left(\ket{N/2,-N/2}+\ket{N/2,N/2}\right).
\label{GHZstate}
\end{equation}
As mentioned in section~\ref{sec:resource-counting} these states achieve Heisenberg scaling in precision for noiseless frequency estimation using the  Hamiltonian in equation \eqref{eq:49}.

The third family of states we consider are the one-axis twisted SSS~\cite{kitagawa_squeezed_1993}
\begin{equation}
\label{eq:14}
\left| \mathrm{SSS}(\mu) \right\rangle = e^{-i \nu S_x} e^{-i \mu S_z^2} \left| \mathrm{PS}\right\rangle.
\end{equation}
This state family is defined by two parameters:  the squeezing parameter, $\mu$, and a local rotation parameter $\nu$ which serves to re-orient the squeezing axis so that the benefit for the specific  Hamiltonian is optimised. As the value of $\nu$ depends only on $\mu$, the one-axis twisted SSS are essentially a one-parameter family. It was shown that for local Hamiltonians and for local uncorrelated dephasing noise (see section~\ref{sec:standard-scenario}) one axis-twisted SSS are asymptotically optimal~\cite{ulam-orgikh_spin_2001}. 

The fourth family of states we consider are the Dicke states in the $x$-basis,
\begin{equation}
\left| D_k \right\rangle  = \mathit{Had}^{\otimes N} \left| N/2, N/2-k\right\rangle,
\label{Dickestates}
\end{equation}
where $\mathit{Had}$ denotes the Hadamard operation
\begin{equation}
\mathit{Had}=\frac{1}{\sqrt{2}}\begin{pmatrix} 1& 1\\1&-1\end{pmatrix}.
\end{equation}
Note that $| \mathrm{SSS}(0) \rangle  = \left| D_0 \right\rangle  = \left| \mathrm{PS} \right\rangle$.

In addition to the four families of states above, we conduct a numerical search for the optimal state in the entire ansatz space $\mathcal{S}_N$. For this task, we choose the Nelder-Mead simplex algorithm~\cite{nelder_simplex_1965} which is known to be successful for low-dimensional, unconditioned search problems~\cite{lagarias_convergence_1998}. This algorithm takes as input a state and time and optimises the coefficients $c_m$ of equation \eqref{eq:5} by maximising $\mathcal{F}/t$. The input state is either a specific state or a randomly chosen one. The coefficients, $c_m$, of a random state are stochastic variables which follow a normal distribution. After all coefficients are chosen the random state is normalised. Such a random state is denoted by $\left| \psi_{\mathrm{rand}} \right\rangle $.

In the next subsection we describe how the QFI can be efficiently calculated for states in our ansatz space.

\subsection{Numerical methods}
\label{sec:numerical-methods-1}

Evaluating the QFI (see equation \eqref{eq:4}) is computationally hard as it requires full diagonalisation of the density matrix $\mathcal{E}[\left| \psi \right\rangle\!\left\langle \psi\right|]$, where $\cE$ denotes the noisy channel (see equation \eqref{eq:3}). However,  if the eigenvalues of $\mathcal{E}[\left| \psi \right\rangle\!\left\langle \psi\right| ]$ are highly degenerate, then the computational effort for calculating the QFI can be significantly reduced.  A state that is symmetric under particle permutations exhibits such large degeneracies. In particular, for $h = S_z$ and $\mathcal{L}$ as in equations (\ref{localnoise},~\ref{eq:53}), we are able to use a specific representation of $\rho(t)$ and $h$ that allows for an efficient calculation of $\mathcal{F}$. Here, we briefly summarise this representation for the local uncorrelated dephasing scenario, equation \eqref{localnoise}, with $\mu_x=\mu_y=0, \, \mu_z=1$, and refer the reader to \ref{sec:numerical-methods} for more details.

Under the effect of local dephasing, any state in the ansatz space remains permutationally invariant. Hartmann showed in~\cite{hartmann_generalized_2012} that any permutationally invariant state can be represented by a weighted sum of $O(N^3)$ specific operators. Thus, even though $\mathcal{E}[\left| \psi \right\rangle\!\left\langle \psi\right|]$ may be of full rank, where $\ket{\psi}$ is permutationally symmetric, its representation is always efficient. Moreover, the operators introduced in~\cite{hartmann_generalized_2012} are such that the action of local dephasing is easy to express. As shown in \ref{sec:numerical-methods} the spectral decomposition of $\mathcal{E}[\left| \psi \right\rangle\!\left\langle \psi\right|]$ can then be efficiently computed.
For an efficient computation of $\mathcal{F}$, it is necessary that $h$ obeys the same symmetries as the state $\mathcal{E}[\left| \psi \right\rangle\!\left\langle \psi\right|]$, as $\mathcal{F}\left(U\mathcal{E}[\left| \psi \right\rangle\!\left\langle \psi\right|]U^\dagger\right)=\mathcal{F}\left(\mathcal{E}[\left| \psi \right\rangle\!\left\langle \psi\right|]\right)$.

The general procedure for computing the QFI is hence the following. Express the initial state, equation \eqref{eq:5}, in terms of the ``Hartmann operators'' introduced in~\cite{hartmann_generalized_2012}. Next, calculate the action of local dephasing on the basis operators. Then, express the operators in terms of joint eigenvectors of $S^2$ and $S_z$, and numerically determine the spectral decomposition of $\mathcal{E}[\left| \psi \right\rangle\!\left\langle \psi\right|]$ as  a function of the coefficients, $c_m$. One can then efficiently evaluate equation \eqref{eq:4}.  We note that the same method can be applied for the case of depolarising noise, even though the effect of this noise model when computed in terms of the ``Hartmann operators'' is more cumbersome. Spatially correlated noise, on the other hand, is easier to treat since the states of equation (\ref{eq:5}) stay within the subspace spanned by $\left\{ \left| j,m \right\rangle  \right\}_m$ under the action of the noise (see \ref{sec:numerical-methods} 
for more details).

\section{Local Hamiltonian with dephasing noise}
\label{sec:standard-scenario}

In this section, we consider the standard scenario where $h$ is given by equation \eqref{eq:49} and $\mathcal{L}$ is given by equation \eqref{localnoise} with $\mu_x=\mu_y=0, \,\mu_z=1$, respectively. Hence, the unitary and noisy evolutions in the standard scenario act parallel to each other if visualised on the Bloch sphere.

Let us first recall the known results pertaining to the standard scenario. It is easy to see that the product state, $\ket{+}^{\otimes N}$, leads to $\mathfrak{F}_{\mathrm{PS}} = N T/(2 \gamma e)$, which has the same scaling as in the absence of noise. The GHZ state, however, yields
$\mathfrak{F}_{\mathrm{GHZ}} = \mathfrak{F}_{\mathrm{PS}}$, which is qualitatively different from the quadratic scaling in $N$ for $\gamma = 0$~\footnote{There is an apparent contradiction in these statements: if $\gamma \rightarrow 0$, then $\mathfrak{F}$ of any state is infinite, while for $\gamma = 0$, one often discusses the finite value (e.g., $\mathcal{F} = t^2 N^2$ for the GHZ state). However, the optimal measurement time for any state is of the order of $1/ \gamma$. That is, the measurement time goes to infinity for vanishing $\gamma$. For infinite measurement times and $\gamma = 0$, the 
QFI of any nontrivial state is infinite.}. Hence, the advantage of the GHZ state compared to the product state is lost as soon as $\gamma \neq 0$.
As already mentioned in section~\ref{sec:resource-counting}, it is known that, asymptotically, the maximal QFI,$\mathfrak{F}$, for any state scales at most linearly with the system size, and the maximal improvement, for any $| \psi \rangle $ and $N$, is bounded by $\mathfrak{F}_{\psi} \leq NT/(2\gamma) = e \mathfrak{F}_{\mathrm{PS}}$, where the bound can only be achieved for $N\rightarrow \infty$~\cite{escher_general_2011}. Thus, the minimum error obtainable is given by $0.61 (\delta \omega)_{\mathrm{PS}}$. It is also known that one-axis twisted SSS with a specific squeezing parameter $\mu$ (which depends on $N$ only) achieve this bound asymptotically~\cite{ulam-orgikh_spin_2001}.

Let us now consider the finite $N$ case. Our aim here is to numerically determine the optimal states and identify their properties. First, in section~\ref{sec:optim-init-stat}, we present numerical evidence that we are able to find the global optimal initial state, $| \psi_{\mathrm{opt}} \rangle $, with very high accuracy. Then, we reveal that there exists a $\mu = \mu_{\mathrm{opt}}$ such that $| \mathrm{SSS}(\mu_{\mathrm{opt}}) \rangle $ is already almost optimal for finite system sizes. However, this implies that there exist states that perform slightly better than $| \mathrm{SSS}(\mu_{\mathrm{opt}}) \rangle $.
Next, in section~\ref{sec:spin-squeezing-not}, we investigate the squeezing strength and the variance $\mathcal{V}(h)$ of the optimal states. Surprisingly, our results indicate that squeezing, often considered as the key-property for high sensitivity experiments, is not necessary. In contrast, we find that for all optimal states it holds that the variance of $h$ scales as $\mathcal{V}(h) \propto N^{1.674}$.
Finally, in section \ref{sec:typic-ansatz-stat}, we show that a randomly chosen state $\left| \psi_{\mathrm{rand}}  \right\rangle\in \mathcal{S}_N$ leads to a $\mathfrak{F}$ that is typically larger than $\mathfrak{F}_{\mathrm{PS}}$. 

Before we go to the details, we point to figure \ref{fig:VarVsQFImax_N10_locphase} (a), which intends to give the reader an overview of the relation between the maximal QFI, $\mathfrak{F}$, and the variance, $\mathcal{V}(h)$, for $N = 10$. One observes that the performances $| \mathrm{SSS}(\mu_{\mathrm{opt}}) \rangle $ and $| \psi_{\mathrm{opt}} \rangle $ are comparable, while Dicke states are sub-optimal. In addition, the state density of a random sampling indicates the distribution of $\mathcal{S}_N$.
Similar observations can be made for all $N$ considered here. 

\begin{figure}[htbp]
  \begin{picture}(420,155) \put(0,0){\includegraphics[width=.47\columnwidth]{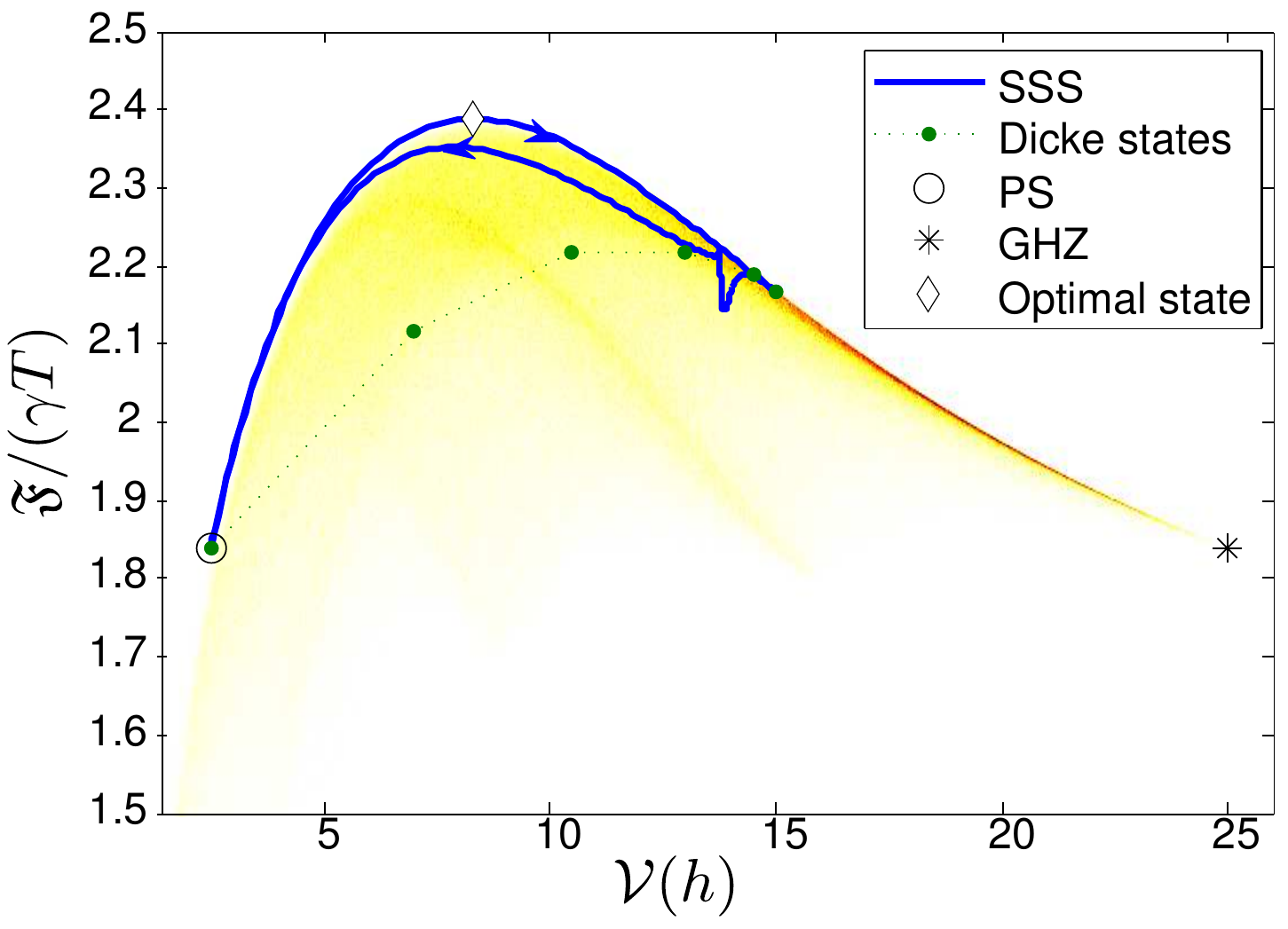}} \put(220,1){\includegraphics[width=.51\columnwidth]{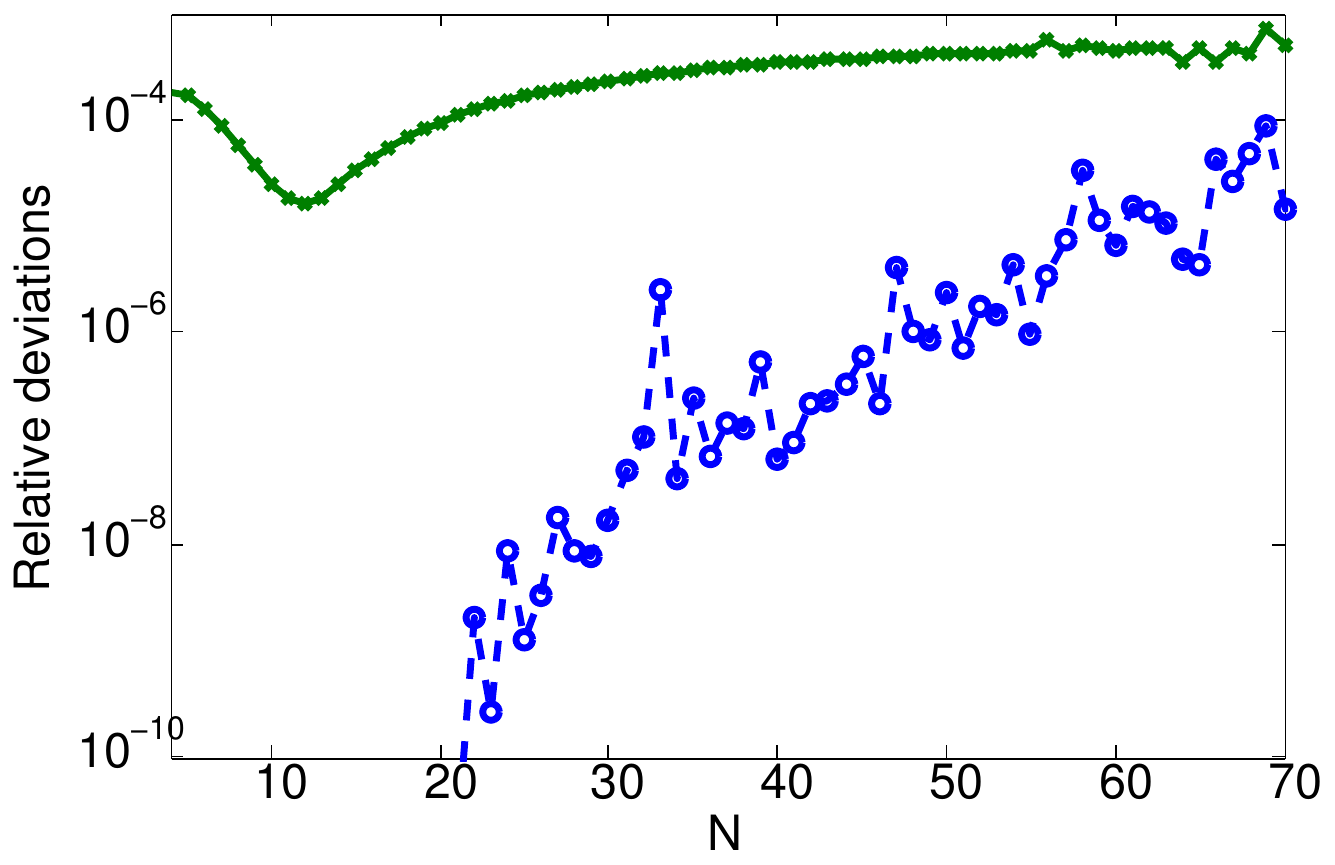}} 
    \put(0,150){(a)}
\put(220,150){(b)}
\end{picture}
\caption[]{\label{fig:VarVsQFImax_N10_locphase}(a) Variance of $h$ compared to $\mathfrak{F}/(\gamma T)$ for $N = 10$. The product state (circle) and GHZ state (star) yield the same maximal QFI, $\mathfrak{F}$, with different variances given by $N/4$ and $N^2/4$ respectively. The blue, continuous curve represents $\mathfrak{F}_{\mathrm{SSS}(\mu)}$ parametrised by $\mu \in \left[ 0,\pi \right]$, which passes through $\max_{\psi \in \mathcal{S}_N} \mathfrak{F}_{\psi}$ (diamond). Contradistinctively, the Dicke states $| D_k \rangle $ with $k \in \left\{ 0,...,\lfloor N/2 \rfloor \right\}$ follow a ``curve'' much below the maximal $\mathfrak{F}$. The results of the random sampling of 10$^6$ states from  $\mathcal{S}_N$ are indicated as a density plot in yellowish to reddish colours. Most of the sampled states are found to be in the region to the right of the global maximum. \label{fig:Fluctuations} (b) Blue, dashed line: Relative fluctuations of $\mathfrak{F}$ found by the maximisation algorithm~\cite{nelder_simplex_1965} within the eleven runs for each $N$. For $N < 22$, the fluctuations are below $10^{-9}$.  Green, solid line: Relative improvement of the numerically found optimal states with respect to $| \mathrm{SSS}(\mu_{\mathrm{opt}}) \rangle $. }
\end{figure}

\subsection{Optimal states}
\label{sec:optim-init-stat}

In this section, we present evidence that we are able to numerically determine the optimal initial states in the standard scenario.

The search for the optimal one-axis twisted SSS for a given $N$ is numerically relatively simple as one only needs to optimise over two parameters, the interrogation time, $t$, and the squeezing parameter, $\mu$. On the contrary, optimisation within the ansatz class $\mathcal{S}_N$ is much more demanding as one has to optimise over $\lfloor N/2 \rfloor +1$ parameters including time (see equation (\ref{eq:5})). 
For each value of $N$, ten randomly chosen states $| \psi_{\mathrm{rand}} \rangle \in \mathcal{S}_N$ are chosen as input states and the optimisation algorithm tries to come as close as possible to $\mathfrak{F}_{\mathrm{opt}} \mathrel{\mathop:}= \max_{\psi\in \mathcal{S}_N} \mathfrak{F}_{\psi}$. Similarly, an eleventh run is performed with $| \mathrm{SSS}(\mu_{\mathrm{opt}}) \rangle $ as the input state for the optimisation algorithm.

We first analyse the fluctuations of the results within these eleven runs for each $N$ (see figure \ref{fig:Fluctuations} (b), blue dashed line). One only finds small relative deviations from the mean result. We are thus confident that we found the global maximum within $\mathcal{S}_N$ for $N\leq 70$ as, if the algorithm would end up in local minima, one would expect to obtain significantly different values for $\mathfrak{F}$ from run to run. In the following, we refer to \textit{optimal states}, $| \psi_{\mathrm{opt}} \rangle $, as those states that are numerically found by this algorithm.

Next, we compare $\mathfrak{F}_{\mathrm{opt}}$ to $\mathfrak{F}_{ \mathrm{SSS}(\mu_{\mathrm{opt}})} $.  For all $N\leq 70$, we found a relative improvement of approximately $10^{-5}$ to $10^{-3}$, which is significantly larger than the fluctuations of the optimisation algorithm (see figure \ref{fig:Fluctuations} (b), green solid line). We therefore conclude that whereas one-axis twisted SSS are very close to the optimal states, there exist states which perform better.

\subsection{Properties of optimal states}
\label{sec:spin-squeezing-not}\label{sec:vari-as-import}

With the confidence of discussing the actual global optimal states, we can analyse their properties. Here, this is done for the squeezing strength, $\xi^2$, and for the variance $\mathcal{V}(h)$. 

The squeezing strength \cite{wang_spin_2003,sorensen_many-particle_2001} measures how spin-squeezed an $N$-qubit state is. For states in the ansatz class (equation \eqref{eq:5}), it is defined as
\begin{equation}
\label{eq:6}
\xi^2 \mathrel{\mathop:}= \min_{\varphi}\frac{\mathcal{V}[\cos( \varphi) S_z + \sin( \varphi) S_y]}{N/4}.
\end{equation}
A state is called spin-squeezed if $\xi < 1$. Let us remark that the optimal one-axis twisted SSS, $| \mathrm{SSS}(\mu_{\mathrm{opt}}) \rangle $, is not the state with the smallest value of $\xi$.

We calculate $\xi$ for the optimal states, $| \psi_{\mathrm{opt}} \rangle $, found by the optimisation routine and observe an interesting phenomenon (see figure \ref{fig:SqueezingParameter} (a)). If the input state for the optimisation algorithm is $| \mathrm{SSS}(\mu_{\mathrm{opt}}) \rangle $, then the resultant state after the optimisation is also spin-squeezed (squares in figure \ref{fig:SqueezingParameter} (a)). In contrast, optimal states found with random input states, $| \psi_{\mathrm{rand}} \rangle $, are not necessarily spin-squeezed (dots in figure \ref{fig:SqueezingParameter} (a)). In particular, for $N > 50$, we found no such state which was spin-squeezed. We therefore conclude that, within $\mathcal{S}_N$,  spin-squeezing is not a necessary property for optimal frequency estimation.  
We remark that we qualitatively found the same result for a more general inequality for spin squeezing along any direction~\cite{korbicz_spin_2005}.  This is to be expected since squeezing in any direction can be decomposed into squeezing component along the mean spin direction and a squeezing component in an orthogonal direction.  As squeezing along the mean spin direction does not enhance the states performance for frequency estimation it follows that squeezing in any direction cannot provide better precision that squeezing in an orthogonal direction only.

We now turn to the second property, the variance with respect to the generator $h$. 
Whereas we find a large deviation of $\xi^2$ for the optimal states, one observes the contrary for $\mathcal{V}(h)$, which is the figure of merit in the noiseless case. In figure \ref{fig:MeanVar_locphase} (b), the lower blue line represents the average variance of $h$ for the optimal states $| \psi_{\mathrm{opt}} \rangle $. The standard deviation from this average is plotted as well for each $N$. However, the relative divergences from the average values are of the order of $10^{-6}$ to $10^{-3}$ and therefore hardly visible. Even though the random states used for the initialisation of the optimisation algorithm exhibit very different variances (scaling as $O(N^{2})$), the optimal states show a reduced variance that is very well approximated by $\mathcal{V}(h)=0.196 N^{1.674}$. This gives strong evidence that the variance is a crucial parameter even in noisy metrology. More importantly, the variance does not scale with $N^2$, as it does for optimal states in the noiseless case. 

\begin{figure}[htbp]\begin{picture}(420,125) \put(0,0){\includegraphics[width=.53\columnwidth]{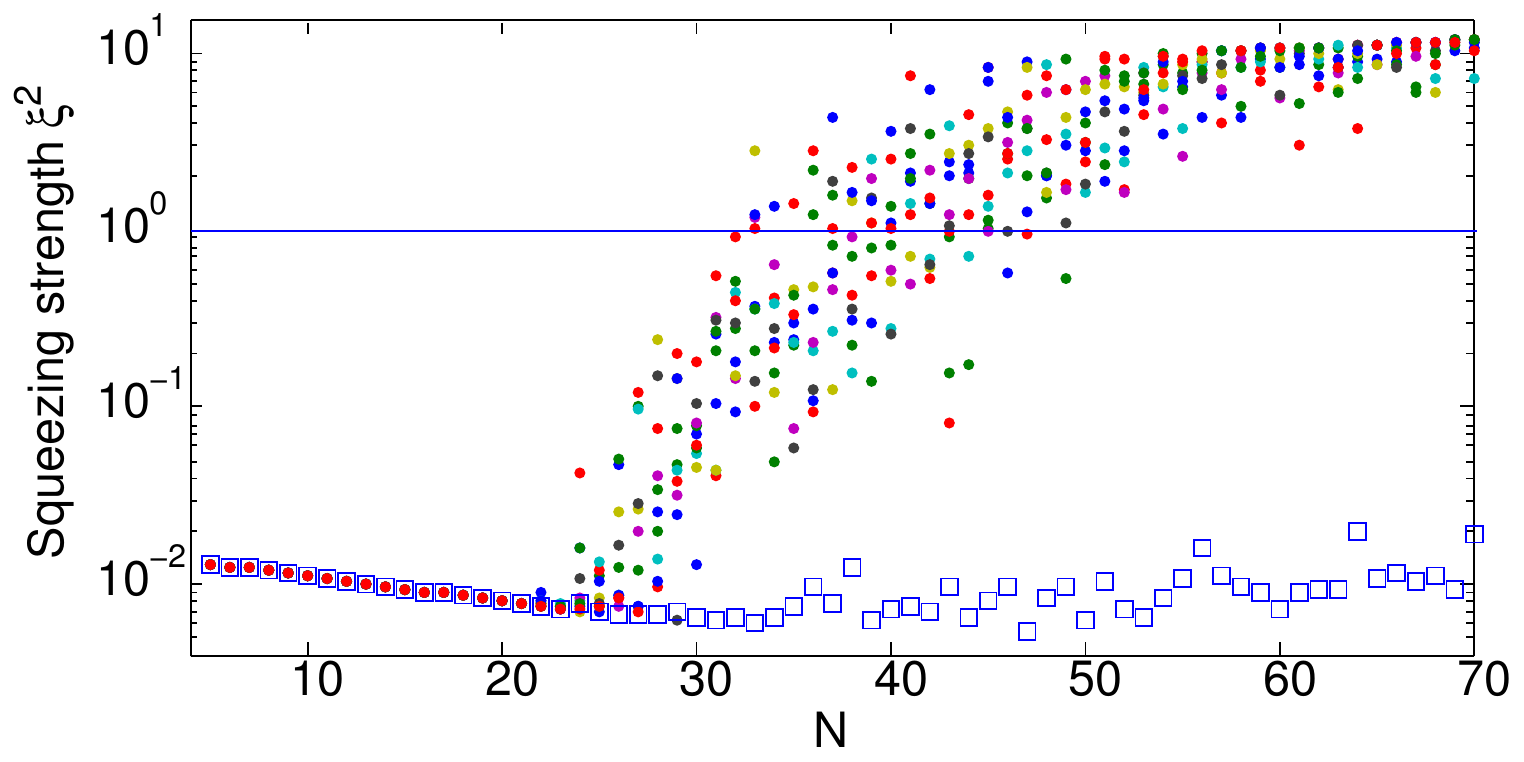}}    \put(230,-3){\includegraphics[width=.47\columnwidth]{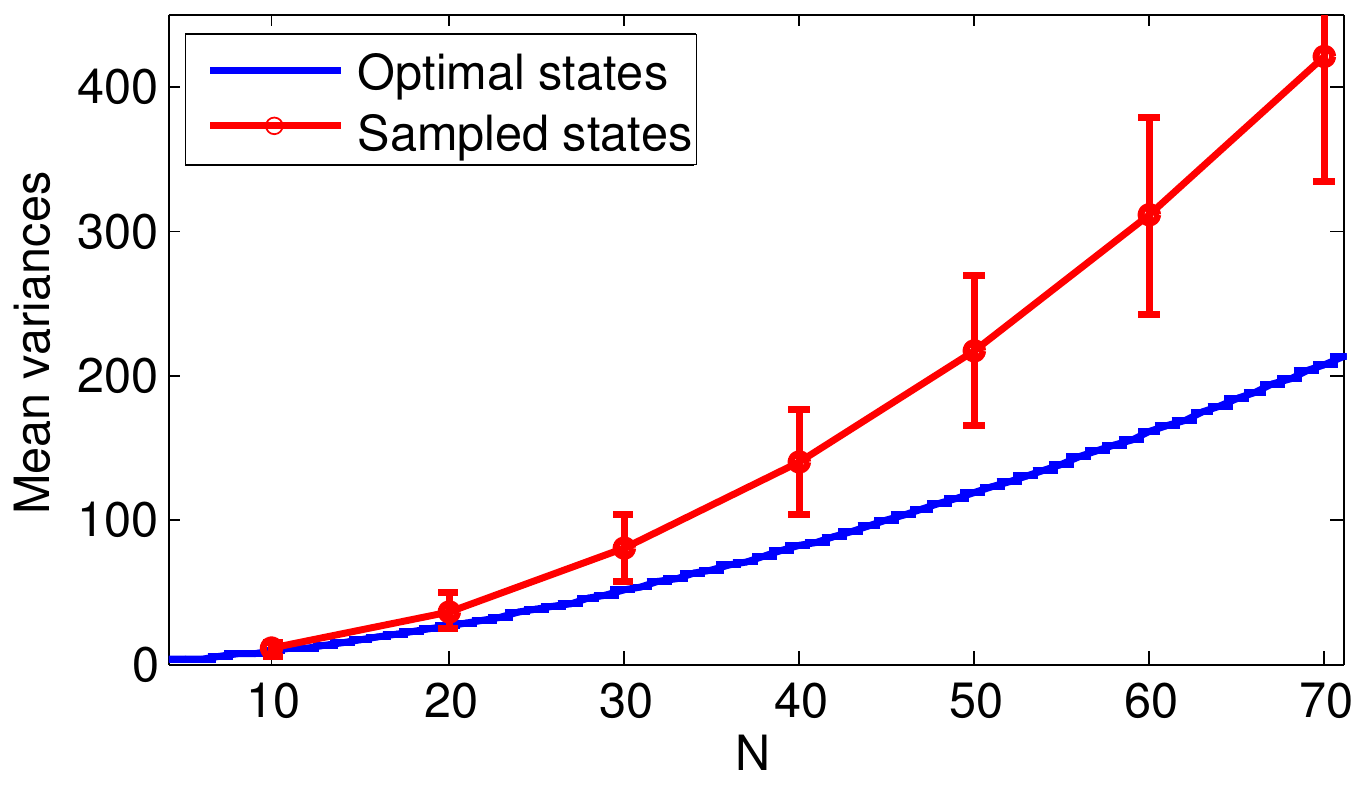}} 
    \put(0,120){(a)}
\put(239,120){(b)}
\end{picture}
\caption[]{\label{fig:SqueezingParameter} (a) Squeezing strength $\xi^2$ as defined in equation (\ref{eq:6}) for $| \psi_{\mathrm{opt}} \rangle $. The blue line represents the border below which states are spin-squeezed. In case $| \mathrm{SSS}(\mu_{\mathrm{opt}}) \rangle $ was used as an initial state in the search algorithm, the resulting state, after optimisation, was found to be spin-squeezed as well (squares). However, if the initialisation was done with a random initial state, $| \psi_{\mathrm{rand}} \rangle $, this is, for larger $N$, typically not the case.\label{fig:MeanVar_locphase} (b) Variance of $h$ for two sets of states is compared for different $N$.  The upper, red curve is the average variance of the uniformly sampled states from the ansatz class $\mathcal{S}_N$. For each $N \in \left\{ 10,20,\dots,70 \right\}$, $10^6$ states are sampled. The average value is very well fitted by $N(N+2)/12$, which is the variance of the uniform distribution. The lower, blue curve is the average variance from the eleven optimal states found with random starts (section~\ref{sec:spin-squeezing-not}). Numerically, we find that the variance is excellently fitted by $0.196 N^{1.674}$. The standard deviation (bars) from the average for these states is comparatively small.}
\end{figure}

Clearly, the variance does not fully determine the usefulness of a quantum state for noisy frequency estimation. In fact, a state that exhibits the same variance as the optimal states does not necessarily have to have an optimal QFI. For instance, the Dicke states $\left| D_1 \right\rangle $ and $\left| D_2 \right\rangle $ in figure \ref{fig:VarVsQFImax_N10_locphase} (a) are close to the optimal variance, yet one finds $\mathfrak{F}_{D_1}, \mathfrak{F}_{D_2} < \mathfrak{F}_{\mathrm{opt}}$. To understand this observation, we study the distributions of the coefficients, $c_m$, of $| \mathrm{SSS}(\mu_{\mathrm{opt}}) \rangle $ and Dicke states with similar variance. An example for $N = 75$ is shown in figure \ref{fig:Coeff_SSS_vs_Dicke} (a), where the $c_m$ of $| \mathrm{SSS}(\mu_{\mathrm{opt}}) \rangle $ and $\left| D_6 \right\rangle $ are compared. Both states have almost the same value for $\mathcal{V}(h)$. There are however two important differences: first, 
the distribution, $\{c_m\}$, of $| D_6 \rangle $ is not a ``smooth'' function of $m$ as it oscillates having $k + 1 = 7$ maxima; second, the tails of the distribution, i.e.~$c_{\pm k}$ for $k\approx N/2$, of $| D_6 \rangle $ are exponentially suppressed. We remark that $\mathfrak{F}_{\mathrm{SSS}(\mu_{\mathrm{opt}})} \approx 1.21 \mathfrak{F}_{D_6}$ in this example.

In order to discover which of the two differences above has the largest impact on $\mathfrak{F}$ we perform the following numerical test.  
We sample a large number of random states ($10^4$ for each $N \in \{10,20,\dots,70\}$). Then, the $c_m$ of the random states are multiplied by $\cos(\vartheta m)$, with $\vartheta$ chosen such that the variance of the random state is equal to the variance of the optimal states~\footnote{In the very unlikely case where no such $\vartheta$ exists, the state is discarded.}. Then, $\mathfrak{F}$ is calculated. We find that for increasing $N$, this class of modified random states typically performs much better than the unmodified random states (see section \ref{sec:typic-ansatz-stat} and figure \ref{fig:StateFamilies_locphase} (b)). What is more, the relative difference between $\mathfrak{F}_{\mathrm{opt}}$ and $\mathfrak{F}$ of the modified random states decreases with $N$, in contrast to the native random states. This indicates that, for larger $N$, the detailed distribution of $c_m$ is not so important, as these modified random states are very different on the level of single coefficients. In particular, $\mathfrak{F}$ apparently does not depend on a specific kind of ``smoothness'', which is present, e.g., for $| \mathrm{SSS}(\mu_{\mathrm{opt}}) \rangle $ (compare also to the sharpness quantity in~\cite{berry_optimal_2000,combes_states_2005}). However, the global structure of a relatively broad distribution that is suppressed at the boundary seems to play a major role.

Finally, we numerically determined that the optimal interrogation time for all optimal states can be very well approximated as $\gamma t_{\mathrm{opt}} \approx 0.4885 N^{-0.2611}$. In particular, this implies that the optimal interrogation time vanishes asymptotically, albeit much slower than for the GHZ state ($\gamma t_{\mathrm{GHZ}} = 1/(2N)$).

\subsection{Typical random states $| \psi_{\mathrm{rand}} \rangle $ perform better than product states}
\label{sec:typic-ansatz-stat}

As a side result, we remark that random states, $| \psi_{\mathrm{rand}} \rangle $, typically perform better than product states.
For each $N \in \left\{ 10,20,\dots, 70 \right\}$, we generate $10^6$ random states as described in section~\ref{sec:ansatz-space}, and for each state $\mathfrak{F}$ and $\mathcal{V}(h)$ are calculated. We find that for larger $N$ almost all random states perform better than both product states and the GHZ state (see figure \ref{fig:StateFamilies_locphase} (b)), while their mean variance is about $N(N+2)/6$ (see figure \ref{fig:MeanVar_locphase} (b)). 
This highlights the importance of $\mathcal{S}_N$, since this result is not expected for states randomly chosen from the entire Hilbert space. It also shows that a quantum improvement is nothing extraordinary and can be achieved by many different states (within $\mathcal{S}_N$).

\begin{figure}[htbp]
  \begin{picture}(420,143) \put(0,0){\includegraphics[width=.47\columnwidth]{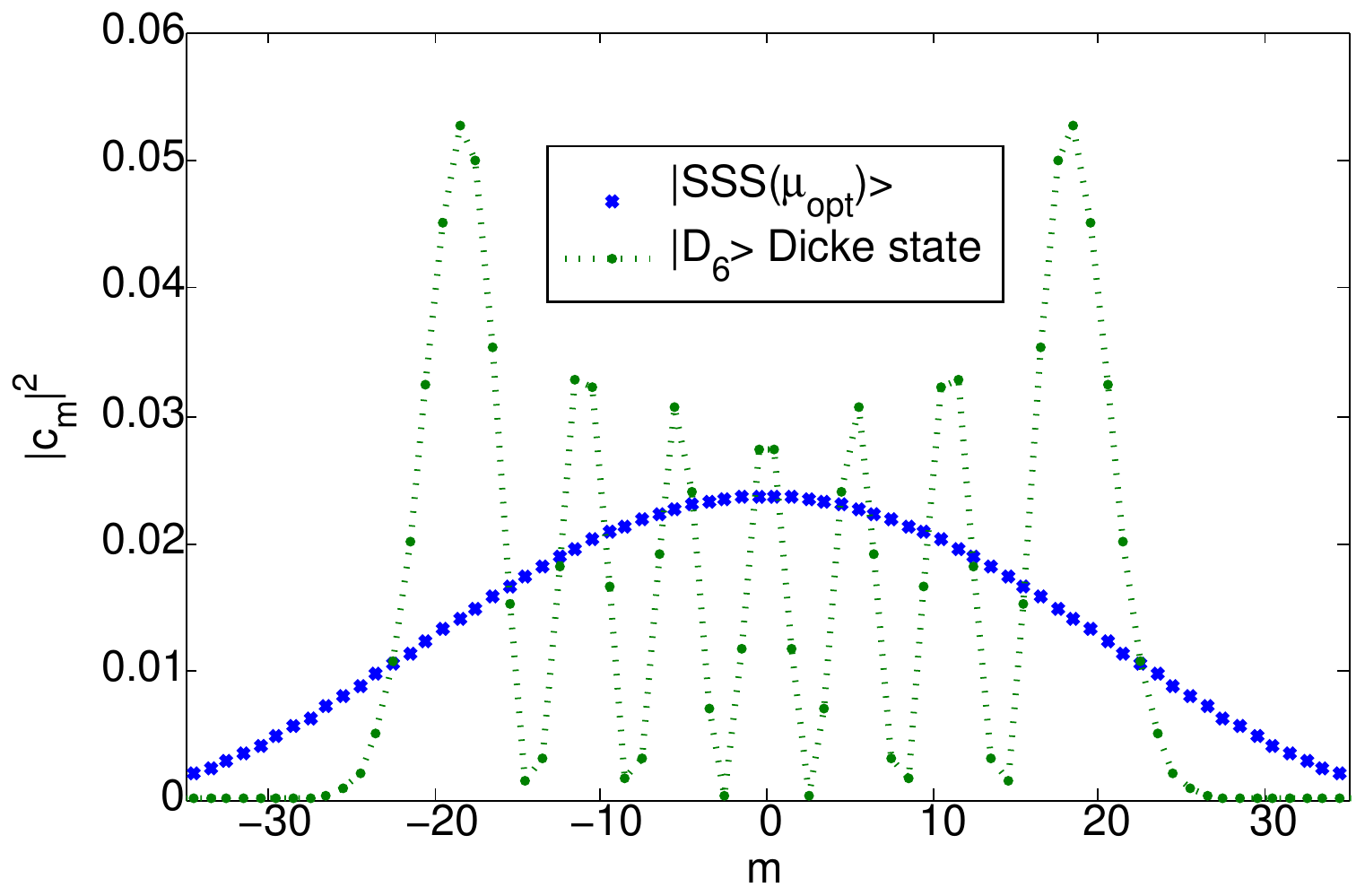}}
   \put(220,0){\includegraphics[width=.49\columnwidth]{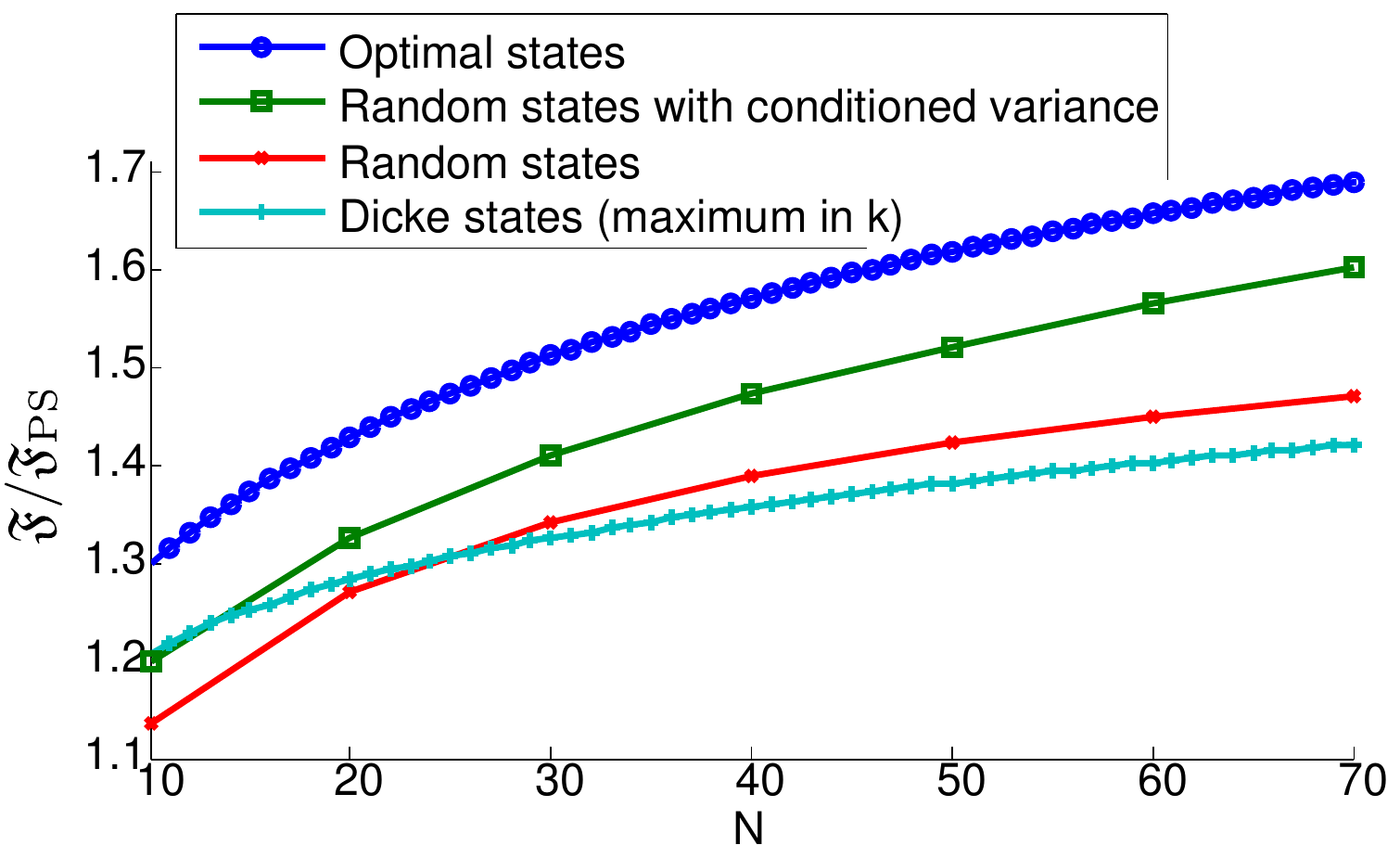}} 
    \put(0,140){(a)}
\put(220,140){(b)}
\end{picture}
\caption[]{\label{fig:StateFamilies_locphase}\label{fig:Coeff_SSS_vs_Dicke}(a) Comparison of the distributions, $|c_m|^2$, for  $| \mathrm{SSS}(\mu_{\mathrm{opt}}) \rangle $ ($\mu_{\mathrm{opt}} \approx 9.51\times 10^{-2}$) and the Dicke state $| D_6 \rangle $ for $N = 75$. Both states exhibit almost the same value for $\mathcal{V}(h)$. Apart from the oscillations in the distribution of $| D_6 \rangle $, the main difference is that for $| D_6 \rangle $ the coefficients close to $\pm N/2$ are exponentially suppressed. (b) Maximal QFI, $\mathfrak{F}$, relative to $\mathfrak{F}_{\mathrm{PS}}$ (and $\mathfrak{F}_{\mathrm{GHZ}}$) for different state classes. The results are shown for Dicke states $| D_k \rangle $ (equation \eqref{Dickestates}) where we optimise over $k$; for the optimal states $| \psi_{\mathrm{opt}} \rangle $ (where the values of $\mathfrak{F}_{\mathrm{opt}}$ and $\mathfrak{F}_{\mathrm{SSS}(\mu_{\mathrm{opt}})}$ are in this plot indistinguishable); and for the unconditioned as well as the conditioned random states, $| \psi_{\mathrm{rand}} \rangle $. For $N = 70$, the maximal improvement is about $1.69 \mathfrak{F}_{\mathrm{PS}}$, which corresponds to a minimal uncertainty of about $0.76 (\delta \omega)_{\mathrm{PS}}$. The standard deviations of the results for the random states are indicated as shaded areas of the same colour. }
\end{figure}

\section{Beyond the standard scenario}
\label{sec:beyond-stand-scen}

The scenario with local Hamiltonian and dephasing noise discussed in section~\ref{sec:standard-scenario} is certainly an important instance in frequency estimation theory. However, many experimental setups have to be described by different unitary evolutions and/or different noise models. Hence, the investigation of scenarios different than the standard scenario is of fundamental and practical relevance. This section is devoted to two such scenarios. In section~\ref{sec:local-depolarization} we consider the situation where local dephasing noise is replaced by local depolarisation noise (equation \eqref{localnoise} with $\mu_x=\mu_y=\mu_z=1/3$), which can be viewed as a combination of local dephasing and transversal bit-flip noise, and show that this change does not lead to a dramatically different picture. In section~\ref{sec:correlated-noise} we consider spatial and temporal correlated dephasing noise, described by Lindblad operator given in equation \eqref{eq:53} and show that the GHZ state remains superior to product states.
 In fact, a favourable scaling of the QFI can be achieved for the GHZ state.

\subsection{Local depolarisation}
\label{sec:local-depolarization}
\begin{comment}
In~\cite{chaves_noisy_2013}, it was shown that the SQL can be overcome if dephasing noise is replaced by transversal bit-flip noise. There, it was shown that $\mathfrak{F}$ reaches a scaling of $O(N^{5/3})$ for the GHZ state, which is close to Heisenberg scaling. Indeed, we recently showed that the Heisenberg limit can be achieved in this scenario by using error correction techniques~\cite{dur_improved_2013}. It is natural to ask what happens if there are deviations from this perfectly directed noise if one does not make use of error correction techniques. The answer was already given in~\cite{chaves_noisy_2013}, where it was shown that in the asymptotic limit one obtains again the SQL. However, it is unknown which states are optimal in such a scenario.
\end{comment}
In this section, we study the local depolarisation noise, equation \eqref{localnoise} with $\mu_x=\mu_y=\mu_z=1/3$. Physically, this model describes uncorrelated interactions between the system qubits and a bath at infinite temperature. As can be easily seen, the minimal error achievable with product states as initial states is the same as for dephasing noise, i.e.~$(\delta \omega)_{\mathrm{PS}} = \sqrt{2 \gamma e/(N T)}$. Recently, an upper bound on $\mathfrak{F}$ was obtained~\cite{demkowicz-dobrzanski_elusive_2012}, stating that $\mathfrak{F}_{\mathrm{opt}} \leq 4 e/3 \mathfrak{F}_{\mathrm{PS}}$. Thus the minimal uncertainty $\delta \omega$ for any initial state scales with the SQL and is at most $0.53 (\delta \omega)_{\mathrm{PS}}$, which is smaller than for local dephasing noise. It is not known, however, whether this bound is tight. As mentioned before, the one axis-twisted SSS were shown to asymptotically reach the (optimal) minimal error. This was done by calculating a lower bound $G\leq\mathfrak{F}$ (see \ref{sec:asymptotic-behavior}) that reaches the ultimate precision limit in the standard scenario. As the replacement of phase noise by 
depolarisation noise does not change $G$, it follows that, asymptotically, $| \mathrm{SSS}(\mu_{\mathrm{opt}}) \rangle $ reaches at least the same precision with depolarisation noise. To our knowledge there exists no example of a state that overcomes this limit. Therefore, it is not clear whether the bound of \cite{demkowicz-dobrzanski_elusive_2012} can actually be achieved.

Let us now come to the results of our numerical studies. The algorithm to compute the QFI in case of this noise model is still efficient for symmetric ansatz states of equation \eqref{eq:5}. However, in practice it is much more demanding (see \ref{sec:numerical-methods}) and we are therefore limited to system sizes up to $N = 30$ (see figure \ref{fig:NvsQFImax_whitenoise}). Whereas for $N = 2,3$, the GHZ state is the optimal initial state, this is no longer the case for $N \geq 4$. However, the GHZ state remains superior to the product state. The QFI for the GHZ state reads (see \ref{sec:maximal-qfi-ghz})
\begin{equation}
\label{eq:21}
\mathcal{F}_{\mathrm{GHZ}} = \frac{2^N e^{-2 N \gamma t} \, t^2}{\left( 1+e^{-\gamma t} \right)^N +\left( 1-e^{-\gamma t} \right)^N}.
\end{equation}
For large $N$, the maximisation of $\mathcal{F}/t$ over time leads to $\mathfrak{F}_{\mathrm{GHZ}} = 2N T/(3 \gamma e)  \approx 1.33 \mathfrak{F}_{\mathrm{PS}}$. We remark that the superior performance of the GHZ state in this scenario can be understood from the results of~\cite{Chaves:13} for the case where the noise has components parallel and perpendicular to the direction of the unitary.  Note that this improvement is already achieved for a relatively small $N$. The optimal states are found using the algorithm described in section~\ref{sec:spin-squeezing-not}. One observes that the relative difference between the optimal states and $| \mathrm{SSS}(\mu_{\mathrm{opt}}) \rangle $ is larger than in the standard scenario. For $N\leq 30$, it is in the range of a few percent. However, this gap seems to vanish as $N$ increases such that one could expect that SSS become optimal for larger $N$ (see figure \ref{fig:NvsQFImax_whitenoise}).

\begin{figure}[htbp]
\centerline{\includegraphics[width=\columnwidth]{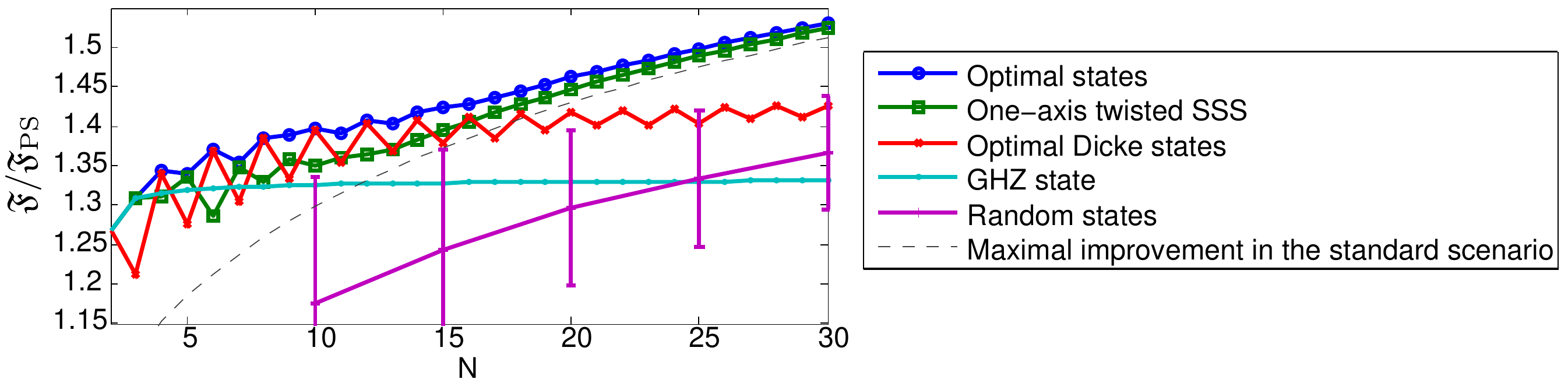}}
\caption[]{\label{fig:NvsQFImax_whitenoise} Maximal QFI for the different state families discussed in section~\ref{sec:ansatz-space} in the presence of depolarisation noise. For comparison, the maximal relative improvement from the standard scenario is plotted as well. It seems that for larger $N$, the difference in the maximal improvement between both scenarios vanishes. For $N = 30$, the minimal achievable uncertainty is $1.53 \mathfrak{F}_{\mathrm{PS}}$, which corresponds to $0.81 (\delta \omega)_{\mathrm{PS}}$.}
\end{figure}

Compared to the standard scenario of section~\ref{sec:standard-scenario}, the best Dicke states are closer to the optimal states for small $N$. However, the difference in their performance as compared to the standard scenario seems to vanish for larger $N$ (see figures \ref{fig:StateFamilies_locphase} (b) and~\ref{fig:NvsQFImax_whitenoise}). Also the average improvement of the randomly chosen initial states ($10^6$ random states for each $N \in \left\{ 10,15,\dots, 30 \right\}$) is compatible with that of the phase noise scenario. Altogether, these results are qualitatively very similar to those obtained in the standard scenario. It is worth noting, however, that there are certain differences in the asymptotic case, e.g.~that the GHZ state is superior to the product state in case of depolarising noise.

\subsection{\label{sec:correlated-noise}Correlated dephasing}

We now discuss our findings for the case of correlated dephasing noise, a dominant source of noise in experiments based on trapped ions~\cite{kielpinski_decoherence-free_2001,langer_long-lived_2005,roos_designer_2006,monz_14-qubit_2011} and on the stability of atomic clocks \cite{buzek_optimal_1999,andre_stability_2004,borregaard_near-heisenberg-limited_2013,kessler_heisenberg-limited_2013} that arises mainly due to fluctuations in the phase reference that collectively affect all the ions. It was shown that optimal states for the standard scenario offer no quantum advantage in this case as spatially correlated noise causes quantum coherences to vanish even faster than in the case of local dephasing noise~\cite{dorner_quantum_2012}. However, clever use of decoherence-free subspaces~\cite{dorner_quantum_2012}, or higher-dimensional probe systems, i.e.~qutrits~\cite{jeske:2013}, allows for higher precision compared to uncorrelated noise and can even restore Heisenberg scaling in precision. On the other hand, temporal correlations can lead to an improved performance of highly entangled states such as the GHZ state~\cite{Chin:12, matsuzaki_magnetic_2011}.  
The reason is that the optimal measurement time $t$ decreases with $N$, eventually entering regime where $\gamma t \ll 1$. Then, (short) temporal correlations lead to a quadratic suppression of the noise strength.

Here, we will focus on qubits and study the effect of temporal as well as spatial correlated noise as defined in equation (\ref{eq:53}). 
For spatially correlated noise, the master equation reads as in equation \eqref{eq:2}. An additional (time-dependent) factor multiplying the super-operator $\mathcal{L}$ (see equation (\ref{eq:53})) models the time-evolution assumed in~\cite{monz_14-qubit_2011}. For $\gamma t \gg 1$, we are in the Markovian regime $ f(t) \approx 1 $, while for $\gamma t \ll 1$, the effect of noise is suppressed through the temporal correlations by $ f(t) \approx \gamma t$.

For the GHZ state, pure spatial correlations lead to $\mathfrak{F}$ constant in $N$~\cite{dorner_quantum_2012}, whereas pure temporal correlations give rise to an improved scaling of the optimal QFI of $\mathcal{O}(N^{3/2})$~\cite{Chin:12}. With $h = S_z$, the QFI for the GHZ states with both kinds of correlations equals
\begin{equation}
\label{eq:16}
\mathcal{F}_{\mathrm{GHZ}} = t^2 N^2 \exp \left[ -2 \left( \gamma t + e^{-\gamma t} - 1 \right) N^2 \right].
\end{equation}
For $N\gg 1$ the optimisation over time can be easily performed and leads to $\mathfrak{F}_{\mathrm{GHZ}} = NT/(\sqrt{2 e} \gamma)$, i.e.~the SQL.

We now compare this result to the one obtained for product states as initial states. Note that, in order to avoid the effect of the correlated noise, one might use a sequential approach, where one qubit after the other is exposed to the evolution. In this case the total time is given by $NT$, where $T$ denotes the evolution time for a single experiment. Then, $\mathfrak{F}_{\mathrm{PS}}$ coincides with the one obtained for local dephasing. The improvement of using a GHZ state is hence limited to a constant factor $\mathfrak{F}_{\mathrm{GHZ}} \approx 2.37 \mathfrak{F}_{\mathrm{PS}}$. If, however, a parallel scheme is employed, where a $N$-qubit product state is used as initial state, the effect of noise changes the scaling of $\mathfrak{F}$. Indeed, numerically we find that then $\mathfrak{F}_{\mathrm{PS}} = O(N^{1/2})$, which is worse than the SQL. Hence, for a parallel scheme, we indeed encounter a quantum improvement in scaling. This is because, for the GHZ state, the time-like and space-like 
correlations partially compensate each other. In contrast, the product state, whose optimal measurement time is relatively long (and constant in $N$), cannot benefit from short correlations in time as much as it is impaired by spatial correlations. It is worth mentioning that, in this scenario, the equivalence between the sequential and parallel strategies for product states, discussed in~\cite{maccone_intuitive_2013}, is not valid.

Numerically we searched for the optimal states within the ansatz class of equation \eqref{eq:5} for $N \leq 75$ using the optimisation algorithm \cite{nelder_simplex_1965} as explained in section \ref{sec:ansatz-space}. We find that the GHZ states are the optimal states. We believe that this is due to the fact that the GHZ state has the shortest optimal measurement time and, for large enough $N$, its entire time evolution occurs in the regime where the strength of the noise is quadratically suppressed.

\section{Summary and conclusions}
\label{sec:summary-conclusions}

In this paper, we investigated different frequency estimation scenarios mainly by numerical means.

In the scenario of local dephasing, we numerically identified optimal states within a system size up to 70 qubits. We found that one-axis twisted SSS are almost optimal for a certain squeezing parameter. However, we also gave strong evidence that the property of being spin-squeezed is not necessary by presenting optimal states that are not spin-squeezed. Moreover, we identified a key property all optimal states have in common, namely the variance with respect to the Hamiltonian as a consequence of a specific global distribution of the coefficients $c_{m}$ in equation \eqref{eq:5}. Interestingly, the variance of the optimal states scales as $N^p$ with $p$ strictly smaller than two.

Furthermore, we showed that similar results apply to the case of local depolarisation noise. However, there the GHZ state outperforms product states. In contrast to that, we demonstrated that more drastic changes of the dynamics give qualitative different results. In the case of temporal and spatial correlation of noise, the scaling of $\mathfrak{F}$ was shown to be better for the GHZ state than for the product state. Note that recently phase estimation in the presence of local or spatially correlated noise has been investigated, and a similar family of states as presented here have been identified as optimal states for parameter estimation for large $N$~\cite{Knysh:14}.

The asymptotic behaviour is in general an interesting question. Concerning the standard scenario, for example, one could ask whether the gap in $\mathfrak{F}$ between the optimal states $| \psi_{\mathrm{opt}} \rangle $ and other state families like Dicke states or random states persists or vanishes in the limit of large $N$. In \ref{sec:asymptotic-behavior}, a lower bound on $\mathfrak{F}$ is investigated. While, for specific SSS, this bound asymptotically reaches the maximal value $\max_{N \rightarrow \infty}\mathfrak{F}_{\mathrm{SSS}(\mu_{\mathrm{opt}})} = NT/ (2\gamma)$, we analytically find that, for a state family with $\mathcal{V}(h) = O(N^2)$, this quantity is strictly below $ NT/ (2\gamma)$. This underlines the importance of the variance (or more generally, the global spectral distribution) and could indicate that a too large variance prohibits optimal performance.

Beyond this, there are further big open questions.  For example, why is there asymptotically no improvement in scaling for the standard scenario? Clearly, $\mathfrak{F}$ scales with $N^{g(N)}$, where $g(N) > 1$ for finite $N$ but is strictly one for $N \rightarrow \infty$. Furthermore, the numerical findings indicate which variance $\mathcal{V}(h)$ is optimal, but it is not clear why it has this particular value. 

From a more practical point of view our results show that whilst one-axis twisted SSS are nearly optimal, there are a plethora of other states from $\mathcal{S}_N$ that perform equally well, or better than the one-axis twisted SSS. Thus, if spin-squeezing is an easily-implementable operation in a given practical realization of frequency estimation, it is a good resource for frequency estimation.  Moreover, for such states a comparatively simple measurement is known to be asymptotically optimal \cite{ulam-orgikh_spin_2001}. However, in experimental realizations where spin-squeezing is not available, or is an expensive operation to implement, our results show that optimal precision might still be possible as there exist nonsqueezed states within $\mathcal{S}_N$ that will be just as good as one-axis twisted SSS.  Hence, our findings open the way to identify high performance states for frequency estimation that may be easily prepared in certain set-ups.  Which states are easiest to prepare, and at what cost, for a certain experimental set-up is an interesting question that warrants further investigation. Moreover,  we showed that for ion trap experiments with spatially and temporally correlated noise,  the GHZ state is not only superior to product states used in parallel, but also achieves a constant improvement even compared to the sequential strategy with single ions.  

An open problem that is also relevant from the experimental point of view is the choice the optimal measurement.
By calculating the QFI, the optimisation of the measurement is intrinsically performed. In fact, for states other than SSS, where it is known that the optimal measurement is local, nothing substantial is known about the optimal measurement basis (apart from the fact that it is the projection onto the eigenspaces of the symmetric logarithmic derivative).

\ack
This work was supported by the Austrian Science Fund (FWF): Grant numbers P24273-N16, Y535-N16, SFB F40-FoQus F4012-N16, and J3462.

\appendix

\section{Details on the numerical method}
\label{sec:numerical-methods}

Here, we present some details of the numerical methods for an efficient calculation of the QFI in the subspace of permutationally invariant density matrices.

We first treat the case of local phase noise with local Hamiltonian (section~\ref{sec:standard-scenario}) and comment later on other scenarios. In the presence of local phase noise, which is described by $\mathcal{E}^{\otimes N}$, where $\mathcal{E}(\rho)=e^{-\gamma t}\rho +(1-e^{-\gamma t}) \sigma_z \rho \sigma_z$, the density matrix $\rho = \mathcal{E}^{\otimes N}(\left| \psi \right\rangle\!\left\langle \psi\right| )$ is no longer permutationally symmetric, but it is still permutationally invariant. Therefore, there exists an efficient representation of $\rho$ and, as we will see, also an efficient way to calculate the eigenvalues and eigenstates, even though the rank of $\rho$ is in general exponential in $N$.

In this appendix, we use the common orthonormal eigenbasis of $S^2 = S_x^2 + S_y^2 + S_z^2$ and $S_z$, which we denote by $| j,m,\alpha \rangle $. One has that $S^2 \left| j,m,\alpha \right\rangle = j(j+1) \left| j,m,\alpha \right\rangle $ and $S_z \left| j,m,\alpha \right\rangle = m \left| j,m,\alpha \right\rangle $ for any $\alpha$, which labels the degeneracies. All the states in the ansatz class (see equation (\ref{eq:5})) satisfy $j = j_{\mathrm{max}} = N/2$. As there is no degeneracy in the common eigenstates of $S^2$ and $S_z$ for $j = j_{\mathrm{max}}$ we skip the index $\alpha$ in the following.

\subsection{Permutationally invariant operator basis}
\label{sec:perm-invar-oper}
In this sub-appendix we introduce the permutationally invariant operator basis of Hartmann~\cite{hartmann_generalized_2012} and determine the action of these operators on the basis states $\{\ket{j,m,\alpha}\}$. 

Let $\mathcal{H}^{\otimes N}$ be the state space of $N$ spin-$1/2$ systems.  Defining $\ket{0}\equiv\ket{1/2,1/2},\, \ket{1}\equiv\ket{1/2,-1/2}$, any $\rho\in\mathcal{B}(\mathcal{H})$ can be written as
\begin{equation}
\rho=\sum_{i_1,i_2,i_3,i_4} \rho_{(i_1,i_2,i_3,i_4)} X_{(i_1,i_2,i_3,i_4)}, 
\label{operatorbasis}
\end{equation}
where 
\begin{equation}
\label{eq:15}
X_{(i_1,i_2,i_3,i_4)} = \left| 0
  \right\rangle\!\left\langle 0\right| ^{\otimes i_1} \otimes \left|0
  \right\rangle\!\left\langle 1\right| ^{\otimes i_2}\otimes \left|1
  \right\rangle\!\left\langle 0\right| ^{\otimes i_3}\otimes \left|1
  \right\rangle\!\left\langle 1\right| ^{\otimes i_4},
\end{equation}
form a basis of $\mathcal{B}(\mathcal{H})$ and the sum in equation \eqref{operatorbasis} runs over all $i_1,i_2,i_3,i_4$ satisfying $\sum_{j=1}^4 i_j=N$.

As $\{\ket{j,m,\alpha}\}$ is an orthonormal basis, the operators of equation \eqref{eq:15} can be written as
\begin{equation}
X_{(i_1,i_2,i_3,i_4)}=\sum_{j,j^{\prime},\alpha, \alpha^{\prime}} c^{(j,j^{\prime},\alpha, \alpha^{\prime})}_{m,m^{\prime}, d}\ketbra{j,m,\alpha}{j^{\prime}, m^{\prime},\alpha^{\prime}},
\label{eq:200}
\end{equation}
where $m=i_1+i_2-N/2, \,m^{\prime}=i_1+i_3-N/2$, and $2d\equiv i_1+i_3$ can be interpreted as the number of places where the symbols in the corresponding bra and ket in equation \eqref{eq:15} are different~\cite{hartmann_generalized_2012} .  Let $\pi: S_N\to\mathcal{H}^{\otimes N}$ be a unitary representation of the permutation group of $N$ objects, $S_N$. Using Schur's lemmas~\cite{Chen:89}, the unitary representation $\{\pi_g;\, g\in S_N\}$ can be decomposed into irreducible representations of $S_N$.  Indeed, in the $\{\ket{j,m,\alpha}\}$ basis all $\{\pi_g;\, g\in S_N\}$ take the form
\begin{equation}
\pi_g=\sum_{j} \pi^{(j)}_g\otimes \one_{d_j}.
\label{eq:201}
\end{equation}
Here, $j$ labels the inequivalent irreducible representations on $S_N$, $\pi^{(j)}$, present in the unitary representation $\pi$, and $d_j$ denotes the multiplicity of the corresponding $\pi^{(j)}$. Consequently, the total Hilbert space, $\mathcal{H}^{\otimes N}$, can be conveniently written as 
\begin{equation}
\mathcal{H}^{\otimes N}\cong\bigoplus_j \cM^{(j)}\otimes \cN^{(j)},
\label{eq:202}
\end{equation}
where the space $\cM^{(j)}\equiv\mathrm{span}\{\ket{j,\alpha}_{\alpha=1}^{\Delta_j}\}$ of dimension, $\Delta_j\equiv\mathrm{dim}(\pi^{(j)})=\binom{N}{N/2-j}\frac{2j+1}{N/2+j+1}$, is the space upon which the irreducible representation $\pi^{(j)}$ acts non-trivially, and the space $\cN^{(j)}\equiv\mathrm{span}\{\ket{j,m}_{m=1}^{d_j}\}$, of dimension $d_j=2j+1$, is the space upon which $\{\pi_g; g\in S_N\}$ act trivially~\cite{BRS07}.  The tensor product in equation \eqref{eq:202} is not a tensor product between real physical systems, but rather a tensor product between two virtual systems described by the state spaces $\cM^{(j)}$ and $\cN^{(j)}$ respectively. The states of these virtual systems can be defined via the isomorphism $\ket{j,m,\alpha}\equiv\ket{j,\alpha}\otimes\ket{j,m}$.      

Hartmann showed that an operator basis for the permutationally symmetric states is given by~\cite{hartmann_generalized_2012}    
\begin{equation}
\label{eq:27}
K_{m,m^{\prime},d} \mathrel{\mathop:}= \frac{1}{i_1! i_2! i_3! i_4!}\sum_{g\in S_N} \pi_g\,(X_{(i_1,i_2,i_3,i_4)})\pi_g^{\dagger}.
\end{equation}
Note that the definition of the basis operators in equation (\ref{eq:27}) differ from that of \cite{hartmann_generalized_2012} by a factor $N!/(i_1! i_2! i_3! i_4!)$. Using Schur's lemmas it can be shown that equation \eqref{eq:27} can be written with respect to the $\{\ket{j,m,\alpha}\}$ basis as 
\begin{equation}
K_{m,m^{\prime},d}=\frac{N!}{i_1! i_2! i_3! i_4!} \sum_j\left(\cD_{\cM^{(j)}}\otimes\cI_{\cN^{(j)}}\right)\circ \cP^{(j)}\left[X_{(i_1,i_2,i_3,i_4)}\right],
\label{eq:203}
\end{equation}
where $\cD$ is the completely depolarising map, $\cD[A]=\frac{\mathrm{tr}(A)}{\mathrm{dim}(\cH)}\one, \, \forall A\in\cB(\cH)$, $\cI$ is the identity map, and $\cP^{(j)}[A]=\Pi_j A \Pi_j$, where $\Pi_j$ is the projector onto the space $\cM^{(j)}\otimes\cN^{(j)}$~\cite{BRS07}. Careful counting gives a total number of $1/6(N+1)(N+2)(N+3)$ different basis operators for the space of permutationally invariant matrices~\cite{hartmann_generalized_2012}.

Plugging equation \eqref{eq:200} into equation \eqref{eq:203} yields
\begin{align}\nonumber
K_{m,m^{\prime},d}&=\frac{N!}{i_1! i_2! i_3! i_4!}\sum_j\sum_{\alpha,\alpha^{\prime}}\left(\cD_{\cM^{(j)}}\otimes\cI_{\cN^{(j)}}\right)\left[c^{(j,\alpha,\alpha^{\prime})}_{m,m^{\prime},d}\ketbra{j,\alpha}{j,\alpha^{\prime}}\otimes\ketbra{j,m}{j,m^{\prime}}\right]\\
&=\frac{N!}{i_1! i_2! i_3! i_4!}\sum_j\frac{\sum_{\alpha=1}^{\Delta_j} c^{(j,\alpha,\alpha)}_{m,m^{\prime},d}}{\Delta_j}\one_{\cM^{(j)}}\otimes\ketbra{j,m}{j,m^{\prime}}.
\label{eq:204}
\end{align}
Defining $\frac{N!}{i_1! i_2! i_3! i_4!}\frac{\sum_{\alpha=1}^{\Delta_j} c^{(j,\alpha,\alpha)}_{(m,m^{\prime},d)}}{\Delta_j}=\mu^{(j)}_{m,m^{\prime},d}$, and noting that $\one_{\cM^{(j)}}=\sum_{\alpha=1}^{\Delta_j}\ketbra{j,\alpha}{j,\alpha}$, equation \eqref{eq:204} becomes
\begin{align}\nonumber
K_{m,m^{\prime},d}&=\sum_j\sum_{\alpha=1}^{\Delta_j} \mu^{(j)}_{m,m^{\prime},d}\ketbra{j,\alpha}{j,\alpha}\otimes\ketbra{j,m}{j,m^{\prime}}\\
&=\sum_j\sum_{\alpha=1}^{\Delta_j} \mu^{(j)}_{m,m^{\prime},d}\ketbra{j,m,\alpha}{j,m^{\prime},\alpha}
\label{eq:205}
\end{align}
where we have made use of the isomorphism $\ket{j,m,\alpha}\equiv\ket{j,\alpha}\otimes\ket{j,m}$ in the last line of equation \eqref{eq:205}.  Finally, defining $P^{(j)}_{m,m^{\prime}}\equiv\sum_{\alpha=1}^{\Delta_j}\ketbra{j,m,\alpha}{j,m^{\prime},\alpha}$ gives
\begin{equation}
    \label{eq:29}
    K_{m,m^{\prime},d} = \sum_{j = j_{\min}}^{N/2} \mu_{m,m^{\prime},d}^{(j)} P_{m,m^{\prime}}^{(j)},
\end{equation}
where $j_{\min}=0\, (1/2)$ for $N$ even (odd).

Thus, the action of the permutationally symmetric operators, equation \eqref{eq:27} on the basis states $\{\ket{j,m,\alpha}\}$ is given by equation \eqref{eq:29}.  The action of phase noise $\mathcal{E}$ is particularly easy to express in this basis. Since $\mathcal{E}(\left| 0\right\rangle\!\left\langle 0\right| ) = \left| 0\right\rangle\!\left\langle 0\right| $, $\mathcal{E}(\left| 1
\right\rangle\!\left\langle 1\right| ) = \left| 1\right\rangle\!\left\langle 1\right| $ and $\mathcal{E}(\left| 0\right\rangle\!\left\langle 1\right| ) = e^{-\gamma t }\left| 0
\right\rangle\!\left\langle 1\right| $, one finds that
\begin{equation}
\label{eq:30}
\mathcal{E}^{\otimes N} K_{m,m^{\prime}, d} = e^{-2 d \gamma t }  K_{m,m^{\prime}, d}.
\end{equation}

\subsection{Efficient representation of permutationally invariant states under phase noise}
\label{sec:effic-repr-perm}

In this sub-appendix we seek an efficient description of states from our ansatz space (see equation \eqref{eq:5}) in terms of the Hartmann operators $K_{m,m^{\prime},d}$ of equation \eqref{eq:27}.  Consider an arbitrary pure state, $\rho=\sum_{m,m^{\prime}} c_m\,c_{m^{\prime}} \ketbra{N/2,m}{N/2,m^{\prime}}$ where $c_m  \in \left[ 0,1 \right]$, belonging to our ansatz space. For $m\geq m^{\prime}$, and $m+m^{\prime}\leq 0$ the matrix elements $|N/2,m\rangle\!\langle N/2,m^{\prime}|$ can be expressed as

\begin{align}\nonumber
\sqrt{\binom{N}{N/2+m}\binom{N}{N/2+m^{\prime}}}\ketbra{N/2,m}{N/2,m^{\prime}}&=\sum_{g\in S_N}\pi_g\left(\sum_{i_1,i_2,i_3,i_4}\frac{1}{i_1! i_2! i_3! i_4!}X_{i_1,i_2,i_3,i_4}\right)\pi_g^{\dagger}\\
&=\sum_{i_1,i_2,i_3,i_4}\left(\frac{1}{i_1! i_2! i_3! i_4!}\sum_{g\in S_N}\pi_g\, X_{i_1,i_2,i_3,i_4}\,\pi_g^{\dagger}\right)
\label{eq:301}
\end{align}
where the second sum in the first line of equation \eqref{eq:301} runs over all $i_1,i_2,i_3, i_4$ satisfying $i_1+i_2=N/2+m$, $i_1+i_3=N/2+m^{\prime}$. Using equation \eqref{eq:203} it follows that the expression in brackets in the second line of equation \eqref{eq:301} is equal to $K_{m,m^{\prime},d}$.  In addition, as $m,\, m^{\prime}$ are fixed the sum over $i_1,i_2,i_3, i_4$ must yield operators  $K_{m,m^{\prime}, d}$ with different $d$'s. Recalling that $2d$ counts the number of non-coincidences between the strings in the ket and bra of $X_{i_1,i_2,i_3,i_4}$ it is easy to determine that, for $m\geq m^{\prime}$, $m+m^{\prime}\leq 0$, $\frac{m-m^{\prime}}{2}\leq d\leq \frac{N-{|m+m'|}}{2}$. Similar arguments hold for all other cases so that generally one finds
\begin{equation}
\label{eq:19}
| N/2,m\rangle\!\langle N/2,m^{\prime}| = \frac{1}{\sqrt{\binom{N}{N/2+m}\binom{N}{N/2+m^{\prime}}}} \sum_{d = \left|
      m-m^{\prime} \right|/2}^{(N-\left| m+m^{\prime} \right|)/2}
  K_{m,m^{\prime}, d}.
\end{equation}

Using equations (\ref{eq:30},~\ref{eq:19}) the action of phase noise on the state $\rho$ can be easily evaluated to be
\begin{equation}
\label{eq:31}
  \rho =\mathcal{E}^{\otimes N} \left( \ket{\psi}\!\bra{\psi}\right)=
\sum_{m,m^{\prime}} \frac{c_m c_{m^{\prime}}}{\sqrt{\binom{N}{N/2+m}\binom{N}{N/2+m^{\prime}}}} \sum_{d =  \left| m-m^{\prime} \right|/2}^{(N-\left| m+m^{\prime} \right|)/2}
  e^{-2 d \gamma t} K_{m,m^{\prime}, d}.
\end{equation}
Note that in order to compute the QFI, we do not need to apply the unitary operator, $U=e^{iHt}$, since it commutes with the Hamiltonian, i.e.~$F(\rho)=F(U\rho U^\dagger)$.  Defining $\lambda_{m,m^{\prime}}^{(j)} \mathrel{\mathop:}= \sum_{d }c_m c_{m^{\prime}}/\sqrt{\binom{N}{N/2+m}\binom{N}{N/2+m^{\prime}}} e^{-2 d \gamma t} \mu^{(j)}_{m,m^{\prime},d}$, we have
\begin{equation}
\label{eq:32}
\rho = \sum_{j,m,m^{\prime}} \lambda_{m,m^{\prime}}^{(j)} P_{m,m^{\prime}}^{(j)}.
\end{equation}
Therefore, we find an efficient description of the noisy state $\rho$ in terms of the operators $P_{m,m^{\prime}}^{(j)}$. As the states $\ket{j,m,\alpha}$ are all orthogonal, the eigenvalues of $\rho$ are obtained by diagonalising the matrices $\Lambda^{(j)} = \left\{ \lambda_{m,m^{\prime}}^{(j)} \right\}_{m,m^{\prime}}\in \mathbbm{C}^{(2j+1)\times (2j+1)}$, which can be done efficiently in $N$.

\subsection{Determination of $\mu^{(j)}_{m,m^{\prime},d}$}
\label{sec:eigen-sing-valu}

The crucial part is now to find the transition values $\mu^{(j)}_{m,m^{\prime},d}$. First, consider the case $m = m^{\prime}$, which renders the operators $K_{m,m,d}$ self-adjoint and the transition values become eigenvalues. Note that $K_{m,m,0}$ are projectors on the eigenspace of $S_z$ with eigenvalue $m$, which implies that $\mu_{m,m,0}^{(j)} = 1$ for all $j,m$. Hence,
\begin{equation}
\label{eq:33}
S_z = \sum_{m = -N/2}^{N/2} m K_{m,m,0}.
\end{equation}
In addition, one finds  
\begin{equation}
\label{eq:34}
  S_x^2 + S_y^2 = \frac{N}{2} \mathit{id}_{\mathcal{H}} + \sum_{i \neq
    j} \left| 0 1 \right\rangle\!\left\langle 1 0\right| ^{(i,j)} +h.c. =
  \sum_m \frac{N}{2} K_{m,m,0} + K_{m,m,1}.
\end{equation}
Combining equations (\ref{eq:33},~\ref{eq:34}) gives 
\begin{equation}
\label{eq:35}
S^2 = \sum_m  (N/2 + m^2) K_{m,m,0} + K_{m,m,1}.
\end{equation}
As $\langle S^2 \rangle_{\left| j,m,\alpha \right\rangle }  = j(j+1)$, one finds that $m^2 + N/2 + \mu^{(j)}_{m,m,1} = j(j+1)$. Hence, one can calculate the values of $\mu^{(j)}_{m,m,1}$ for all $j,m$. 

Multiplying $X_{(N/2+m-d,d,d,N/2-m-d)}$ by $\pi_g X_{(N/2+m-1,1,1,N/2-m+1)} \pi_g^{\dagger}$, for some $g \in S_N$ and 
$d>1$, results in one out of four possibilities, depending on the choice of $g$
\begin{equation}
\label{eq:400}
X_{(N/2+m-d,d,d,N/2-m-d)}\pi_g X_{(N/2+m-1,1,1,N/2-m+1)} \pi_g^{\dagger} 
\rightarrow
\begin{cases}
   X_{(N/2+m-d-1,d+1,d+1,N/2-m-d-1)}\\
   X_{(N/2+m-d,d,d,N/2-m-d)}\\
   X_{(N/2+m-d+1,d-1,d-1,N/2-m-d+1)}\\
   0.
\end{cases}
\end{equation}
To carry over the results for the respective symmetrised versions, one 
has to carefully count the number of nontrivial permutations of 
$K_{m,m,d}$ and $K_{m,m,1}$ and the number of instances for the four 
cases in equation \eqref{eq:400}. For example, the third possibility in 
equation \eqref{eq:400} occurs $N!/\left[ (N/2+m-d)! d!d! (N/2-m-d)! \right] 
(N/2 + m - d) (N/2 - m -d)$ times, which is $(d+1)^2$ times more often 
than the total number of nontrivial permutations for $K_{m,m,d+1}$. Similarly, one 
has to reason for the other two nontrivial cases. Then, one has
\begin{equation}
\label{eq:36}
K_{m,m,d}K_{m,m,1} = \left[ \left(\frac{N}{2}-d+1\right)^2-m^2 \right]
  K_{m,m,d-1}  +d(N-2d)K_{m,m,d}+(d+1)^2 K_{m,m,d+1}.
\end{equation}
Hence by taking successive powers of $S^2$,
\begin{equation}
\label{eq:37}
S^{2n} = \sum_{m=-N/2}^{N/2}\sum_{d = 0}^n \kappa^{(n,0)}_{m,m,d} K_{m,m,d}
\end{equation}
(the index $0$ will be used later) one obtains for any $n$
\begin{equation}
\label{eq:38}
\left[ j \left( j+1 \right) \right]^n =  \sum_{m=-N/2}^{N/2}\sum_{d = 0}^n \kappa^{(n,0)}_{m,m,d} \mu^{(j)}_{m,m,d}.
\end{equation}
Starting with $n=1$, one calculates $\mu^{(j)}_{m,m,n}$ recursively by inserting the eigenvalues $\mu^{(j)}_{m,m,d}$ with $d < n$.

In the case of $m \neq m^{\prime}$, observe that $S_{-} = S_x - i S_y = \sum_{m= -N/2}^{N/2-1} K_{m,m+1,1/2}$.   Similar arguments as for equation \eqref{eq:36} lead to
\begin{equation}
\label{eq:39}
  K_{m,m^{\prime},d}K_{m^{\prime},m^{\prime}+1,1/2} = \left[ \frac{1}{2}(N+m+m^{\prime}) - d + 1 \right] K_{m,m^{\prime} + 1, d-1/2}  + \left[ \frac{1}{2}(m+m^{\prime})+ d +1 \right] K_{m,m^{\prime}+1,d+1/2}.
\end{equation}
Hence, one iteratively finds the expression of $S^{2n}S_{-}^i$ in terms of the Hartmann basis, equation \eqref{eq:27}),
\begin{equation}
\label{eq:40}
S^{2n}S_{-}^i = \sum_{m=-N/2}^{N/2-i}\sum_{d = i/2}^{n+i/2} \kappa^{(n,i)}_{m,m,d} K_{m,m+i,d}.
\end{equation}
As  $\left\langle j,m,\alpha \right| S^{2n}S^i_{-} \left| j,m+1,\alpha \right\rangle  = [j(j+1)]^n \prod_{k=1}^i C^{-}_{j,m+k}$, with $C^{-}_{j,m} = \sqrt{j(j+1)-m(m-1)}$, one has linear equations for any $(n,i)$ which can be solved recursively using the solution for $(n-1,i)$:
\begin{equation}
\label{eq:41}
\sum_{d = i/2}^{n+i/2} \kappa^{(n,i)}_{m,m,d} \mu^{(j)}_{m,m+i,d} = [j(j+1)]^n \prod_{k=1}^i C^{-}_{j,m+k}.
\end{equation}

In this way, one finds all $\mu^{(j)}_{m,m^{\prime},d}$ and therefore an efficient way to express $\rho$ in the basis of $P^{(j)}_{m,m^{\prime}}$. The actual implementation of the effective density operator description is done with four-dimensional tensors. Their processing is facilitated by using the Tensor Toolbox~\cite{TTB_Software}. The QFI is then calculated via the numerical diagonalising of $\rho$.

\subsection{Remarks on other scenarios}

We shortly note how the method changes if one considers different scenarios.

\textit{Depolarisation noise.---} In section~\ref{sec:local-depolarization}, local depolarisation noise is considered. The action of depolarising noise on diagonal elements is: $\mathcal{E}(\left| 0 \right\rangle\!\left\langle 0\right| ) = (1+e^{-\gamma t})/2 \left| 0 \right\rangle\!\left\langle 0\right|  + (1-e^{-\gamma t})/2 \left| 1 \right\rangle\!\left\langle 1\right| $ and  $\mathcal{E}(\left|1 \right\rangle\!\left\langle 1\right| ) = (1+e^{-\gamma t})/2 \left| 1 \right\rangle\!\left\langle 1\right|  + (1-e^{-\gamma t})/2 \left| 0 \right\rangle\!\left\langle 0\right| $. Using this fact it is straightforward to determine the action of depolarising noise on the Hartmann operators as
\begin{equation}
\label{eq:11}
\mathcal{E}^{\otimes N} K_{m,m^{\prime},d} = \sum_{\nu = -n/2 - i/2}^{n/2 - i/2} e^{-2 d \gamma t} \beta_{\nu} K_{\nu,\nu+i,d},
\end{equation}
where $n = N/2-d$, $i = m^{\prime} - m$ and (with $\epsilon = \nu - m$)
\begin{equation}
\label{eq:12}
  \beta_{\nu} = \left( \frac{1+e^{-\gamma t}}{2} \right)^n\times
  \begin{cases}
\frac{\binom{i_1+\epsilon}{\epsilon}}{\binom{i_4}{\epsilon}} \sum_{x \in X^{+}} \binom{i_1}{x} \binom{i_4}{x+\epsilon}  (\frac{1-e^{-\gamma t}}{1+e^{-\gamma t}})^{\epsilon + 2x}   & \epsilon \geq 0\\
\frac{\binom{i_4-\epsilon}{-\epsilon}}{\binom{i_1}{-\epsilon}} \sum_{x \in X^{-}} \binom{i_1}{x-\epsilon} \binom{i_4}{x}  (\frac{1-e^{-\gamma t}}{1+e^{-\gamma t}})^{-\epsilon + 2x}      & \epsilon < 0.
  \end{cases}
\end{equation}
The values of $i_k$ are determined by $(m,m^{\prime},d)$ (see equation \eqref{eq:27}) and the sets for the summation index read
\begin{equation}
\label{eq:13}
  X^{+}  = \left\{ 0, \dots, \min(i_1, i_4 - \epsilon) \right\} \quad \text{and} \quad
  X^{-}  = \left\{ 0, \dots, \min(i_4, i_1 + \epsilon) \right\}.
\end{equation}
The effort in calculating the action of depolarisation noise is thus higher than that of local phase noise. For this reason the maximal system size is restricted to $N = 30$.

\textit{Correlated noise.---} In the presence of correlated noise (see section~\ref{sec:correlated-noise}), the numerical treatment is straightforward. The action onto an off-diagonal element simply reads $\mathcal{E}| N/2,m \rangle\!\langle N/2,m^{\prime}| = p^{(m-m^{\prime})^2}| N/2,m \rangle\!\langle N/2,m^{\prime}|$ with $p = \exp( - \gamma t - e^{-\gamma t} + 1)$. This means that the noisy state stays within the subspace spanned by $\left\{ \left| N/2,m \right\rangle  \right\}_m$. Therefore, a more complicated treatment using the Hartmann basis is not necessary.

\section{Asymptotic behaviour}
\label{sec:asymptotic-behavior}
\label{sec:deta-some-analyt}
\label{sec:langle-s_x-rangle}

As discussed in section \ref{sec:spin-squeezing-not}, the numerical results for the standard scenario suggest that the variance as well as a specific behaviour of the coefficients, $c_m$, of the state, in the eigenbasis of the Hamiltonian, play a key role in the metrological performance of a quantum state in the standard scenario. An interesting question is what happens to the average improvement of $\mathfrak{F}$ for certain state families as $N$ becomes large. For instance, although the relative difference between $\mathfrak{F}_{\mathrm{opt}}$ and $\mathfrak{F}$ of Dicke states increases in the investigated range of $N$, this gap could stabilise to a finite value, or even close in the limit $N\to\infty$.  Unfortunately, in this limit, numerical results do not provide definitive answers as calculation of the QFI requires the spectral decomposition of the time-evolved quantum state. Even for simple state families, this is asymptotically not feasible.

In this appendix, we consider a particular measurement that clearly leads to a lower bound on the QFI. The measurement is assumed to be in the eigenbasis of $S_\varphi = \exp(-i \varphi S_z) S_x \exp(i \varphi S_z)$, i.e.~a measurement in the $x-y$ plane. The quantity
\begin{equation}
\label{eq:7}
G = \max_{\varphi} \frac{\left|  \frac{d} { d \omega} \langle S_\varphi \rangle_{\rho(t)}\right|^2} {\mathcal{V}_{\rho(t)}(S_\varphi)},
\end{equation}
which has also been considered in~\cite{huelga_improvement_1997, ulam-orgikh_spin_2001}, is equivalent to the QFI for any initial state if $N = 1$, and for the product state for any $N$. It was recently shown~\cite{zhong_optimal_2013} that $G$ is in general a lower bound on the QFI, i.e.~$G\leq \mathcal{F}$.

Optimising over $\varphi$ and inserting the time-evolved state (see equation \eqref{eq:2}) for a symmetric state, $\ket{\psi}$, we obtain
\begin{equation}
\label{eq:8}
G =  \frac{N t^2 s^2}{e^{2 \gamma t} - 1 + 4 \mathcal{V}_{\psi}(S_y)/N },
\end{equation}
with $s = 2\langle S_{x} \rangle_{\psi }/N$. 

If $s \rightarrow 1$  and $ \mathcal{V}_{\psi}(S_y)/N\rightarrow 0$ for $N \rightarrow \infty$, $\max_t T G/t \leq \mathfrak{F}$ reaches the optimal value $NT/(2\gamma)$ for $t \rightarrow 0$.
We now show that states $| \psi \rangle $ whose distribution of coefficients, $\{c_m\}$, is sufficiently smooth and for which $\mathcal{V}(h) = O(N^2)$, cannot reach $s = 1$ asymptotically. Therefore, such a state can not achieve the bound $NT/(2 \gamma)$, and is suboptimal with respect to the quantity $G$. Note that if a similar statement was to hold true for $\mathfrak{F}$, one could conclude that states satisfying $\mathcal{V}(h)=\mathcal{O}(N^2)$ are asymptotically less useful than the optimal states.

Let us now come to the lemma. 
\begin{Lem}
We consider states $| \psi \rangle \in \mathcal{S}_N$ given in equation (\ref{eq:5}), where the coefficients $c_m = c_{N-m} \in [0,1]$ only depend on $N$. We assume that the coefficients are sufficiently ``smooth'' in the sense that, $c_{m+1} = c_m - \epsilon_m(N)$ with $\lim_{N \rightarrow \infty} \epsilon_m(N) = 0$ for all $m \in \left\{ -N/2,\dots,N/2-1 \right\}$. In addition, be $\left| \psi \right\rangle $ such that $\mathcal{V}_{\psi}(h) = \sum_{i = 0}^N m^2 c_m^2 = O(N^2)$. Then it holds that $2\lim_{N \rightarrow \infty} \langle S_x \rangle_{\psi}/N < 1 $.
\end{Lem}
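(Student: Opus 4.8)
The plan is to bound $\langle S_x\rangle_{\psi}$ from above by a quantity that decreases with the variance $\mathcal{V}_{\psi}(h)$, divide by $N/2$, and let $N\to\infty$. First I would use $S_x=\tfrac12(S_-+S_-^\dagger)$ together with the fact that $|\psi\rangle$ lies in the maximal-spin sector $j=N/2$ to write $\langle S_x\rangle_{\psi}=\sum_{m=-N/2}^{N/2-1}c_m c_{m+1}\,w_m$ with $w_m:=\sqrt{(N/2-m)(N/2+m+1)}=\sqrt{J^2-(m+\tfrac12)^2}$ and $J:=(N+1)/2$. The heuristic is that $w_m\le J$ with a deficit growing quadratically in $m$, so a broadly spread $\{c_m\}$ (large variance) cannot align well with $S_x$.

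Next I would pass from the products $c_m c_{m+1}$ to squares. The elementary inequality $c_m c_{m+1}\le\tfrac12(c_m^2+c_{m+1}^2)$ plus a shift of the summation index gives $\langle S_x\rangle_{\psi}\le\tfrac12\sum_m c_m^2\,(w_m+w_{m-1})$, with the convention $w_m:=0$ outside the allowed range of $m$. Applying $\sqrt{J^2-a^2}\le J-a^2/(2J)$ termwise yields $w_m+w_{m-1}\le(N+1)-2m^2/(N+1)$ uniformly in $m$; since $\sum_m c_m^2=1$ and $\langle h\rangle_{\psi}=\sum_m m c_m^2=0$ (so $\mathcal{V}_{\psi}(h)=\sum_m m^2 c_m^2$), this gives $\langle S_x\rangle_{\psi}\le\tfrac{N+1}{2}-\mathcal{V}_{\psi}(h)/(N+1)$, i.e.\ $2\langle S_x\rangle_{\psi}/N\le 1+1/N-2\mathcal{V}_{\psi}(h)/(N(N+1))$. (A one-line variant: in the $j=N/2$ sector $\langle S_x^2\rangle+\langle S_y^2\rangle+\langle S_z^2\rangle=\tfrac N2(\tfrac N2+1)$, and with $\langle S_y^2\rangle\ge0$, $\langle S_z^2\rangle=\mathcal{V}_{\psi}(h)$ one gets $\langle S_x\rangle^2\le\langle S_x^2\rangle\le\tfrac N2(\tfrac N2+1)-\mathcal{V}_{\psi}(h)$, the same bound up to a square root.) Letting $N\to\infty$: as $\mathcal{V}_{\psi}(h)$ is of order $N^2$, there is $c>0$ with $\mathcal{V}_{\psi}(h)\ge cN^2$ for all large $N$, so the subtracted term tends to $2c>0$ and $\limsup_N 2\langle S_x\rangle_{\psi}/N\le 1-2c<1$, which is the assertion.

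\textbf{Main obstacle.} The step I expect to cause trouble is not any estimate --- the index shifts and the handful of boundary terms of the $m$-sum are routine bookkeeping --- but the interpretation of the hypothesis. The argument genuinely needs a \emph{lower} bound $\mathcal{V}_{\psi}(h)=\Omega(N^2)$, so ``$\mathcal{V}_{\psi}(h)=O(N^2)$'' has to be read as ``comparable to $N^2$'', i.e.\ a fixed fraction of the maximal value $N^2/4$; a smooth family with $\mathcal{V}_{\psi}(h)=o(N^2)$ (for instance a Gaussian of width $\sigma_N\to\infty$ with $\sigma_N=o(N)$) obeys the smoothness assumption yet has $2\langle S_x\rangle_{\psi}/N\to1$. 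Conversely, the smoothness hypothesis $\epsilon_m(N)\to0$ is not strictly needed for the bound above, since the AM--GM step replaces $c_m c_{m+1}$ by squares unconditionally; it could instead be used to sharpen Step~2 by writing $c_m c_{m+1}=c_m^2-c_m\epsilon_m$ and controlling the cross term $\sum_m c_m\epsilon_m w_m$ directly.
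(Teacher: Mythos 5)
Your proof is correct, and it takes a genuinely different route from the paper's. The paper evaluates $2\langle S_x\rangle_\psi/N=\sum_m \sqrt{(1-2m/N)(1+2m/N+2/N)}\,c_m c_{m+1}$, invokes the smoothness hypothesis to replace $c_{m+1}$ by $c_m$ up to a vanishing error, and then passes to the continuum limit $x=2m/N$ to obtain $\int_{-1}^1\sqrt{1-x^2}\,c(x)^2\,dx$, concluding that the limit equals $1$ only for $c(x)^2=\delta(x)$, which the variance assumption excludes. Your argument instead produces a quantitative finite-$N$ inequality, $2\langle S_x\rangle_\psi/N\le 1+1/N-2\mathcal{V}_\psi(h)/(N(N+1))$, via AM--GM on $c_mc_{m+1}$ and the elementary bound $\sqrt{J^2-a^2}\le J-a^2/(2J)$; your one-line variant $\langle S_x\rangle^2\le\langle S^2\rangle-\langle S_z^2\rangle=\frac N2(\frac N2+1)-\mathcal{V}_\psi(h)$ is even cleaner and is essentially the standard uncertainty-type tradeoff in the maximal-spin sector. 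What your approach buys is twofold: it dispenses with the smoothness hypothesis entirely (the paper needs it to justify the sum-to-integral step, which is itself somewhat delicate since the existence of the pointwise limit $c(x)$ and the exchange of limit and sum are not argued), and it gives an explicit rate rather than only a strict inequality in the limit. Your observation about the hypothesis is also on point and applies equally to the paper's own proof: as literally written, $\mathcal{V}_\psi(h)=O(N^2)$ is an upper bound and does not exclude $c(x)^2\to\delta(x)$ (the product state, with $\mathcal{V}=N/4$, is smooth and attains $2\langle S_x\rangle/N=1$), so both arguments require the reading $\mathcal{V}_\psi(h)=\Omega(N^2)$, i.e.\ a variance that is a fixed fraction of $N^2$ -- which is clearly what the authors intend given the context of the surrounding discussion.
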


\begin{proof}
We first compute $\langle S_x \rangle_{\psi}$.  From the theory of angular momentum $S_x=1/2(S_++S_-)$ where the operators $S_{\pm}$ are the standard raising and lower operators whose action on the basis states $\{\ket{j,m}\}$ is given by
\begin{equation}
S_{\pm}\ket{j,m}=\sqrt{(j\mp m)(j\pm m+1)}\ket{j,m\pm 1}
\label{ladders}
\end{equation}
respectively. Computing $\langle S_\pm\rangle_{\psi}$ yields
\begin{equation}
\label{eq:47}
\begin{split}
\langle S_+\rangle_{\psi}&=\sum_{m=-N/2}^{N/2-1}\sqrt{(N/2-m)(N/2+m+1)}c_{m}c_{m+1}\\
\langle S_-\rangle_{\psi}&=\sum_{m=-N/2+1}^{N/2}\sqrt{(N/2+m)(N/2-m+1)}c_{m}c_{m-1}.
\end{split}
\end{equation}
However, as $c_m=c_{N-m} \in \mathbbm{R}$ it is easy to see that $\langle S_+\rangle=\langle S_-\rangle$. Hence
\begin{equation}
  \begin{split}
    \label{eq:46}
    2 \langle S_x \rangle_{\psi}/N&=2\sum_{m = -N/2}^{N/2-1}
    \frac{\sqrt{(\frac{N}{2}-m)(\frac{N}{2}+m+1)}}{N} c_m c_{m+1}\\
    &=\sum_{m = -N/2}^{N/2-1}\sqrt{\left(1-\frac{2m}{N}\right)\left(1+\frac{2m}{N}+\frac{2}{N}\right)} c_m \left[ c_m - \epsilon_m(N)\right],
  \end{split}
\end{equation}
where we have used the smoothness of the coefficients $c_m$ in the second line of equation \eqref{eq:46}.

In order to convert the sum into an integral, let $x = 2m/N$. Taking the limit $N \rightarrow \infty$ one obtains
\begin{equation}
\label{eq:48}
\lim_{N \rightarrow \infty} 2 \langle S_x \rangle_{\psi}/N= \int_{-1}^1 \sqrt{1 - x^2} c(x)^2 dx,
\end{equation}
where $c(x) = \lim_{N \rightarrow \infty} c_{x N/2} \in [0,1]$ with $\int_{-1}^1 c(x)^2 dx = 1$. Note that $\sqrt{1-x^2} \leq 1$ and equals unity if and only if $x = 0$.  Due to the normalisation of $c(x)$, we find that $ \lim_{N \rightarrow \infty} 2 \langle S_x \rangle_{\psi}/N= 1$ only if $c(x)^2 = \delta(x)$, the Dirac delta function. However,  as $\mathcal{V}(h) = O(N^2)$ by assumption, $c(x)^2$ cannot be the Delta function. Hence, $\lim_{N \rightarrow \infty}2 \langle S_x \rangle_{\psi}/N < 1$.  This completes the proof.
\end{proof}

\section{Maximal QFI for GHZ under depolarising noise}
\label{sec:maximal-qfi-ghz}

We consider here the depolarising channel (see equation \eqref{localnoise}).
In~\cite{dur_improved_2013} (Appendix I), we have shown that the QFI of for the GHZ state as initial state is given by
\begin{equation}
\label{eq:22}
\mathcal{F} = t^2 N^2 \frac{2^N e^{-2 \gamma N t}}{\left( 1+e^{-\gamma t} \right)^N + \left( 1-e^{-\gamma t}\right)^N}.
\end{equation}
For the time optimisation, we assume that, for large $N$, one can restrict oneself to $\gamma t \ll 1$, which is \textit{a posteriori} justified. Then, one approximates $\left(1+e^{-\gamma t}\right)/2$ by $e^{-\gamma t/2}$ and neglect $\left[ \left(1-e^{-\gamma t}\right)/2 \right]^N$. Hence, the QFI equals $\mathcal{F} \approx t^2 N^2 \exp(-3/2 N \gamma t)$. The small difference in the power then gives rise to a modified optimal time $t_{\mathrm{opt}} \approx 2/(3 N \gamma)$ which leads to $\mathfrak{F}_{\mathrm{GHZ}} = 2NT/(3 \gamma e)$.

\section*{References}
\label{sec:references}

\bibliographystyle{iopart-num}

\bibliography{References}

\end{document}